\documentclass[12pt,a4paper,twoside]{amsart} 
\usepackage{latexsym}
\usepackage{amsmath}
\usepackage{amssymb}
\usepackage{amsfonts}
\usepackage{ifthen}
\usepackage{mathrsfs}               
\usepackage{bbm}                    
\usepackage{bbold}                  
\usepackage{textcomp}               
\usepackage{amstext}
\usepackage{enumerate}
\usepackage{amstext}
\pagestyle{plain}
%
%
%
%
%
%
%
%
\voffset 0truecm  \topmargin 1.2truecm  \headheight 0pt  \headsep 0pt
\topskip 0pt  \textheight 20.5truecm \footskip 1.5cm
%
%
\hoffset 0truecm  \oddsidemargin 1.3truecm  \evensidemargin 0.45truecm 
\textwidth 14.1truecm
%
%
%
%
%
%
%
%
%
%
%
\newcommand{\C}{\mathbb{C}}
\newcommand{\N}{\mathbb{N}}

\newcommand{\R}{\mathbb{R}}

\newcommand{\Z}{\mathbb{Z}}
%
%
%

\newcommand{\sgn}{\mathrm{sgn}}

%
%
%
\newcommand{\id}{\mathbbm{1}}
\newcommand{\klg}{\leqslant} 
\newcommand{\grg}{\geqslant}          
\newcommand{\ve}{\varepsilon}
\newcommand{\vp}{\varphi}

\newcommand{\wt}[1]{\widetilde{#1}}
    
\newcommand{\SPn}[2]{\langle \,#1\,|\,#2\, \rangle} 
\newcommand{\SPb}[2]{\big\langle \,#1\,\big|\,#2\, \big\rangle} 

\newcommand{\ol}[1]{\overline{#1}} 
\newcommand{\bigO}{\mathcal{O}}    
%
%
%
\newcommand{\bp}{{\bf p}}
\newcommand{\V}[1]{\mathbf{#1}}
\newcommand{\valpha}{\mbox{\boldmath$\alpha$}}

\newcommand{\veps}{\mbox{\boldmath$\varepsilon$}}
%
%
%
%

\newcommand{\HP}{\mathfrak{h}}
\newcommand{\HR}{\mathcal{H}}

\newcommand{\Fock}{\mathcal{F}}
\newcommand{\core}{\mathscr{D}_0}
\newcommand{\dom}{\mathcal{D}}
\newcommand{\form}{\mathcal{Q}}
%
%
%
\newcommand{\spec}{\mathrm{\sigma}}

%
%
%

\newcommand{\DA}{D_{\mathbf{A}}}                
\newcommand{\DAT}{\widetilde{D}_\mathbf{A}}


\newcommand{\Hf}{H_f}                           
\newcommand{\HT}{\wt{H}_f}                      


\newcommand{\ad}{a^\dagger}                     

%
%
%
\newcommand{\TA}{T_{\bf A}}
\newcommand{\da}{d_{\bf A}}

\newcommand{\pa}{{\bf p}_{\bf A}}
\newcommand{\TAA}{{T}_{\bf A}}
\newcommand{\hf}{H_f}
\newcommand{\RA}{R_{c}(z)}
\newcommand{\RAn}{R_{\infty}(z)}
\newcommand{\ra}[1]{R^{V}_{c}(#1)}
\newcommand{\raf}{R_{c}(\rho)}
\newcommand{\ran}[1]{R^V_{\infty}(#1)}

\newcommand{\hilbert}{\mathcal{H}}

\newcommand{\al}{\mbox{\boldmath$\alpha$}}
\newcommand{\bx}{\mathbf{x}}
\newcommand{\Hn}{H_\infty}
\newcommand{\Hr}[1]{H^{#1}_c}
\newcommand{\Hnr}[1]{H^{#1}_\infty}
\newcommand{\thh}{\hat{H}_\infty}
\newcommand{\mc}{mc^2}
%
%

%
%
%
%

\newcommand{\sC}{\mathscr{C}}

%
%
%

%
%
%
%
%
\renewcommand{\Im}{\mathrm{Im}\,}
\renewcommand{\Re}{\mathrm{Re}\,}
\renewcommand{\le}{\,\leqslant\,}
\renewcommand{\ge}{\,\geqslant\,}
\newtheorem{theorem}{Theorem}[section]
\newtheorem{lemma}[theorem]{Lemma}

\newtheorem{corollary}[theorem]{Corollary}
\newtheorem{hypothesis}{Hypothesis}
\theoremstyle{remark}
\newtheorem{remark}[theorem]{Remark}

\numberwithin{equation}{section}
\title[On the non-relativistic limit of a model in quantum electrodynamics]{On the non-relativistic limit of a model in quantum electrodynamics}
\author{Edgardo Stockmeyer}
\address{Edgardo Stockmeyer\\
 Mathematisches Institut\\
Ludwig-Maximilians-Universit\"at\\Theresienstra{\ss}e 39\\
D-80333 M\"unchen, Germany.}
\email{stock@math.lmu.de}

\subjclass{81Q10}
\keywords{Non-relativistic limit, pseudo-relativistic,
quantum electrodynamics}
\date{07.05.09}
\begin{document}

\begin{abstract}
We consider a (semi-)relativistic spin-$1/2$ particle interacting with quantized radiation.  The Hamiltonian has the form
$\hat{H}_c^V:=\{c^2[(\bp+\V{A})^2+{\bf \sigma}\cdot\V{B}]+(\mc)^2\}^{1/2}-\mc+V+\Hf$. Assuming that
the potential $|V|$ is  bounded with respect to the momentum $|\bp|$, we show that $\hat{H}_c^V$ converges in norm-resolvent sense to the usual Pauli-Fierz operator when $c$, the speed of light, tends to $\infty$.
\end{abstract}

\maketitle


\section{Introduction}
We consider a (semi-)relativistic electron interacting with quantized radiation in the presence of an external potential $V$. The corresponding Hamiltonian  
is given formally by 
\begin{equation}
  \label{eq:41}
  \hat{H}_c^V:=\sqrt{c^2[(\bp+\V{A})^2+{\bf \sigma}\cdot\V{B}]+(\mc)^2}-\mc+V+\Hf\,,
\end{equation}
acting on $L^2(\R^3,\C^2)\otimes\mathcal{F}$, where $\mathcal{F}$ is the bosonic Fock space.
Here $\bp$ is the momentum of the electron, $\V{A}$ is the magnetic vector potential (in the Coulomb gauge), $\V{B}$
is the magnetic field, $\Hf$ is the free field-energy operator, 
$m>0$ is the mass of the electron, $c>0$ is the speed of light, and $\V{\sigma}=(\sigma_1,\sigma_2,\sigma_3)$, where $\sigma_j$ are the usual Pauli-matrices. 

Without an external potential $V$, the Hamiltonian \eqref{eq:41} admits a fiber decomposition $\hat{H}_c^0({\bf P})$, since the total momentum ${\bf P}\in\R^3$ is conserved.  In 
\cite{MiyaoSpohn2008} Miyao and Spohn investigated the ground state of $\hat{H}_c^0({\bf P})$ (the polaron). In \cite{MatteStockmeyer2009a}, Matte and the present author studied exponential localization of the low-lying spectral subspaces for Hydrogen-like atoms modelled by  $\hat{H}_c^V$. In this paper we are interested in the 
non-relativistic limit of $\hat{H}_c^V$.

In absence of quantized radiation the non-relativistic limit is well understood (see \cite{Thaller1992}). However, in the presence of a field, only few results have been presented. Let $\mathcal{D}_c^V:=\DA-\mc+V+\Hf$ be  the Dirac operator coupled to the quantized field acting on $L^2(\R^3,\C^4)\otimes\mathcal{F}$, where $\DA$ is the usual Dirac operator minimally coupled to the vector potential $\V{A}$. Arai in \cite{Arai2003} proved that $\mathcal{D}_c^V$  converges, as $c$ tends to $\infty$, in 
{\it strong} resolvent sense to the Pauli-Fierz Hamiltonian, denoted by $\hat{H}_\infty^V$ acting on $L^2(\R^3,\C^2)\otimes\mathcal{F}$. More precisely,
\begin{equation*}
  \label{eq:42}
 {\rm s\,-}\lim_{c\to\infty}(\mathcal{D}_c^V -i)^{-1}= 
\left(\begin{array}{cc}
(\hat{H}_\infty^V-i)^{-1} & 0_{2\times 2}\\
0_{2\times 2} & 0_{2\times 2}
\end{array}\right)\,.
\end{equation*} 
 The method used cutoffs for $V$, $\Hf$, and 
$\V{A}$ in order to include them as  perturbations. (We remark that it is unlikely  to improve this convergence to norm resolvent sense, since the 
spectrum of $\mathcal{D}_c^0$ for $\V{A}=0$ is equal to the whole real line, i.e., $\sigma(D_0+\Hf)=\R$.) By the same method the non-relativistic limit of the Dirac polaron was investigated by Arai in \cite{Arai2006}.

In this paper we do not include the field energy $\Hf$ as a perturbation 
but we make strong use of the fact that $\hat{H}_c^0$ is bounded from below.
Assuming that the potential $V$ is (operator) bounded with respect to the (electric) momentum $|\bp|$, we show that, as $c$ (the speed of light) tends to infinity,   $\hat{H}_c^V$ converges in norm resolvent sense to the usual (non-relativistic) Pauli-Fierz operator $\hat{H}_\infty^V$.   

In the next section  we introduce  the models to be considered  and state precisely our main results, Theorems \ref{sa} and \ref{main} below.
\section{Definition of the models and main results}\label{def}
\noindent
Let $\HP=L^2(\R^3\times\{1,2\})$ be the one-photon Hilbert space. We use the convention
$$
k=(\V{k},\lambda)\in\R^3\times\{1,2\}\,;\qquad \int dk :=
\sum_{\lambda\in\Z_2}\int_{\R^3}d^3\V{k}\,.
$$
The space of the quantized photon field is the bosonic Fock
space
$$
\Fock\equiv\Fock[\HP]:= \bigoplus_{n=0}^\infty\HP^{\otimes_{\mathrm{s}}\,n}\ni\psi=(\psi^{(0)},\psi^{(1)},\psi^{(2)},\ldots) \,,
$$
where $\HP^{\otimes_{\mathrm{s}}\,n}$ is the $n$-fold symmetric tensor product of $\HP$ and $\HP^{\otimes_{\mathrm{s}}\,0}=\C$.  As usual we denote the
vacuum vector by $ \Omega:=(1,0,0,\dots)\in\Fock[\HP] $.  Many
calculations will be performed on the following dense subspace of
$\Fock[\HP]$,
$$
\sC_0 := \C\oplus\bigoplus_{n\in\N} C_0((\R^3\times\{1,2\})^n)\cap
\HP^{\otimes_{\mathrm{s}}\,n}\,.  
$$
The free field energy of the photons is the self-adjoint operator
given by
\begin{align*}
  \dom(\Hf) := \Big\{ &(\psi^{(n)})_{n=0}^\infty\in\Fock[\HP]\::
  \\
  & \sum_{n=1}^\infty\int\Big|\sum_{j=1}^n\omega(k_j)
  \psi^{(n)}(k_1,\dots,k_n) \Big|^2 dk_1\dots dk_n < \infty \Big\}\,,
\end{align*}
and, for $\psi\in\dom(\Hf)$, 
$$
(H_f \psi)^{(0)} = 0,\quad (H_f\psi)^{(n)}(k_1,\dots,k_n) =
\sum_{j=1}^n\omega(k_j) \psi^{(n)}(k_1,\dots,k_n)\,,\quad n\in\N\,.
$$
Here the dispersion relation $\omega$ is an almost everywhere nonzero multiplication 
operator on $\HP$ that depends only on $\V{k}$ and not on $\lambda\in\{1,2\}$. 

The annihilation operator $a(f)$ of a photon state $f\in\HP$
is, for any $\vp\in\sC_0$,  given by 
\begin{equation}\label{*}
a(f) \vp=\int \ol{f}(k) a(k) \vp dk\,,
\end{equation}
where $a(k)$ annihilates a photon with wave vector/polarization $k$,
$$
(a(k) \psi)^{(n)}(k_1,\dots,k_n) = (n+1)^{1/2}
\psi^{(n+1)}(k,k_1,\dots,k_n)\,,\quad n\in\N_0\,,
$$
almost everywhere, and $a(k) \Omega=0$. For $f\in\HP$ the creation operator 
$\ad(f)$ satisfies $\SPn{a(f) \phi}{\vp}=\SPn{\phi}{\ad(f) \vp}$, for any $\phi,\vp\in\sC_0$. We define $\ad(f)$ and $a(f)$ on their 
maximal domains. 
The following canonical commutation relations hold true
on $\sC_0$, for  any $f,g\in\HP$,
\begin{equation}\label{ee}
[a(f) , a(g)] = [\ad(f) , \ad(g)] = 0\,,\qquad
[a(f) , \ad(g)] = \SPn{f}{g} \id\,.
\end{equation}
The full Hilbert space containing electron and photon degrees of
freedom is
$$
\HR := L^2(\R^3,\C^4)\otimes\Fock\,.
$$
It contains the dense subspace,
$$
\core := C_0^\infty(\R^3,\C^4)\otimes\sC_0\,.
\quad\textrm{(Algebraic tensor product.)}
$$
We consider general form factors fulfilling the following condition:
\begin{hypothesis}\label{hyp-G}
For every $k\in(\R^3\setminus\{0\})\times\Z_2$ and  $j\in\{1,2,3\}$,
  $G^{(j)}(k)$ is a bounded continuously differentiable function,
   $\R^3\ni\V{x}\mapsto
  G^{(j)}_{\V{x}}(k)$, satisfying
\begin{equation}
  \label{def-d3}
  2\int\omega(k)^{\ell} \|\V{G}(k)\|^2_\infty dk
  \klg d_\ell^2\,,
  \qquad \ell\in\{-1,0,1,2\}\,,
\end{equation}
and
\begin{equation}\label{hyp-rotG}
  2\int\omega(k)^{-1} \|\nabla_{\V{x}}\times\V{G}(k)\|^2_\infty dk
  \klg d_{1}^2\,,
\end{equation}
for some $d_{-1},\ldots,d_2\in(0,\infty)$.  Here $\V{G}_{\V{x}}(k)=
\big(G^{(1)}_{\V{x}}(k),G^{(2)}_{\V{x}}(k),G^{(3)}_{\V{x}}(k)\big)$
and $\|\V{G}(k)\|_\infty:=\sup_{\V{x}}|\V{G}_{\V{x}}(k)|$.
\end{hypothesis}
\begin{remark}
In the specific physical situation the dispersion relation is  
$\omega^{\small{\rm ph}}(k)=\hbar c|\V{k}|$, where $\hbar>0$ is Plank's constant divided by $2\pi$. The form factor  is given by 
\begin{equation}\label{Gphys}
G^{\small{\rm ph}}_\V{x}(k) 
:= \frac{\hbar^{1/2}}{2\pi(c |\V{k}|)^{1/2}}\id_{\{|\V{k}|\klg\Lambda\}}
 e^{-i\V{k}\cdot\V{x}} \veps(k),
\end{equation}
for $(\V{x},k)\in\R^3\times(\R^3\times\{1,2\})$
with $\V{k}\not=0$.
Here 
$\Lambda>0$ is an ultraviolet cut-off
parameter.
The polarization vectors, $\veps(\V{k},\lambda)$, $\lambda\in\Z_2$,
are homogeneous of degree zero in $\V{k}$ such that
$\{\hat{\V{k}},\veps(\hat{\V{k}},0),\veps(\hat{\V{k}},1)\}$
is an orthonormal basis of $\R^3$,
for every $\hat{\V{k}}\in \mathbb{S}^2$.
This corresponds to the Coulomb gauge. 

We further remark that the  dependence on $c$ of \eqref{Gphys} and the definition of $\omega^{\small{\rm ph}}$ is not relevant to us, since we want to study the electron and 
not the photon non-relativistic limit.
\end{remark}
We introduce the self-adjoint Dirac matrices
$\alpha_1,\alpha_2,\alpha_3$, and $\beta$ 
that act on the four spinor components of an element from $\HR$.
They are given by
$$
\alpha_j := 
\begin{pmatrix}
0&\sigma_j\\\sigma_j&0
\end{pmatrix} ,\quad j\in\{1,2,3\}\,,\qquad
\beta := \alpha_0 := 
\begin{pmatrix}
\id&0\\0&-\id
\end{pmatrix}\,,
$$
where $\sigma_1,\sigma_2,\sigma_3$ denote the standard Pauli matrices,
and fulfill the relations
\begin{eqnarray}\label{Clifford1}
&&\alpha_i \alpha_j + \alpha_j \alpha_i = 2 \delta_{ij} \id\,,\qquad i,j\in\{0,1,2,3\}\,,
\\\label{Clifford2}
&& \alpha_i \alpha_j=\delta_{ij}+i\epsilon_{ijm} \Sigma_m\,,\qquad i,j\in\{1,2,3\}\,, 
\end{eqnarray}
where $\delta_{ij}$ is the Kroneker delta, $\epsilon_{ijm}$ is the Levi-Civita
antisymmetric symbol, and
\begin{equation}
  \label{eq:21}
  \Sigma_j:=
\left(\begin{array}{cc}
\sigma_j & 0\\
0 & \sigma_j 
\end{array}\right)\,.
\end{equation}

The interaction between the electron
and the photons  is  given by
\begin{equation}\label{dd}
\valpha\cdot\V{A}
 \equiv \valpha\cdot\V{A}(\bx)
 :=  a(\valpha\cdot\V{G}_\bx)+\ad(\valpha\cdot\V{G}_\bx)\,.
\end{equation}
In order to define the semi-relativistic
Pauli-Fierz operator we recall that the
free Dirac operator, minimally coupled to $\V{A}$, is given as
\begin{equation}\label{def-DA}
\DA := c \valpha\cdot(-i\hbar\nabla+ \V{A})+\beta mc^2
\,,
\end{equation}
where $m>0$ is the mass of the electron. (Note that usually 
there is a factor in front of $\V{A}$; in physical units $e/c$.
We have absorbed this factor in the definition of  $\V{A}$.)
An application of Nelson's commutator theorem shows that
$\DA$ is essentially self-adjoint
on $\core$ 
\cite{Arai2000,LiebLoss2002}.
We denote its closure starting from $\core$
again by the same symbol. 
Henceforth, we set
\begin{equation}
  \label{eq:19}
  \pa:=-i\hbar\nabla+ \V{A}\,,\qquad\mbox{and}\qquad\bp:=-i\hbar\nabla\,.
\end{equation}
We have the following relation on $\core$:
\begin{equation}
  \label{eq:20}
  \DA^2=c^2(\valpha\cdot\pa)^2+(mc^2)^2=c^2(\pa^2+{\bf \Sigma}\cdot{\bf B})+(mc^2)^2\,,
\end{equation}
where ${\bf \Sigma}$ is defined through \eqref{eq:21} and the magnetic field ${\bf B}$ is given as
\begin{equation}
  \label{eq:22}
  {\bf B}\equiv {\bf B}(\bx)=\nabla\times \V{A}(\bx)=a(\nabla\times \V{G}_\bx)+\ad (\nabla\times \V{G}_\bx)\,.
\end{equation}
Equation \eqref{eq:20} shows, in particular, that 
$
\spec(\DA) \subset (-\infty,-1]\cup[1,\infty)\,.
$

Next we state the conditions on  the potential $V$.
\begin{hypothesis}\label{hyp-V}
Let $V$ be a symmetric (matrix-valued) multiplication operator acting on $\hilbert$, such that
for almost all $\bx\in\R^3$,
$$V(\bx)=\left(\begin{array}{cc}
V_1(\bx) & 0_{2\times 2}\\
0_{2\times 2} & V_2(\bx)
\end{array}\right)\,.
$$
We assume that there exist
$a,b\in\R^+$, such that for all $\vp\in\core$,
\begin{equation}
  \label{eq:17}
  \SPb{\vp}{V^2 \,\vp}\le a^2\SPb{\vp}{\bp^2\,\vp}+ b^2\|\vp\|^2\,
\end{equation}
holds.
\end{hypothesis}
We define now the semi-relativistic Pauli-Fierz operator (minus the electron rest energy)  through  the quadratic form on $\core$,
 \begin{equation}
  \label{eq:23}
  \SPb{\vp}{\Hr{V}\vp}:= \SPb{\vp}{\big(|\DA|-\mc+V+\Hf\big)\vp}\,.
\end{equation}
Note, that on $\core$, 
\begin{equation}
  \label{eq:39}
  \Hr{0}\,\equiv\,\Hr{}=\left(\begin{array}{cc}
\hat{d}_{\V{A}}+\Hf & 0_{2\times 2}\\
0_{2\times 2} &\hat{d}_{\V{A}}+\Hf
\end{array}\right)\,,
\end{equation}
where 
\begin{equation}\label{@}
\hat{d}_{\V{A}}:=\sqrt{c^2[(\bp+\V{A})^2+{\bf \sigma}\cdot\V{B}]+(\mc)^2}-\mc
\end{equation} 
acts on $L^2(\R^3,\C^2)\otimes\mathcal{F}$.

Since the quadratic form $\SPb{\vp}{\Hr{}\vp}\equiv\SPb{\vp}{\Hr{0}\vp}$ is positive, there is a unique  (positive) self-adjoint extension of $\Hr{}\upharpoonright \core$, which we denote again by $\Hr{}$, whose
domain is contained in the  form domain of the closure of the quadratic form defined in \eqref{eq:23} for $V=0$.

We prove the following result at the end of Section \ref{estimates}.
\begin{theorem}
  \label{sa}
  Assume that $\V{G}$ fulfills Hypothesis \ref{hyp-G} and $V$ fulfills 
Hypothesis \ref{hyp-V}. Then, for $c$ (the speed of light) large enough,
the operator $ \Hr{V}\upharpoonright\core$ has a unique self-adjoint extension, denoted by 
$\Hr{V}$, with form domain $\form(\Hr{V})$ satisfying
\begin{equation}\label{fd}
\dom(\Hr{V})\subset \form(\Hr{V})=\form(\Hr{})=\form(|\bp|+\Hf)\,.
\end{equation}
Moreover, $ \Hr{V}$  is bounded below by a constant independent of $c$.
\end{theorem}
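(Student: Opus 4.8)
The plan is to reduce everything to the scalar function $f_c(s):=\sqrt{c^2 s+(\mc)^2}-mc^2$, $s\ge 0$, evaluated on the non-negative operator $T:=(\valpha\cdot\pa)^2=\pa^2+{\bf \Sigma}\cdot\V{B}$. By \eqref{eq:20} one has $|\DA|-mc^2=f_c(T)$, so on $\core$ the quadratic form of $\Hr{}$ is that of $f_c(T)+\Hf$, and the statement amounts to comparing $f_c(T)+\Hf$ with $|\bp|+\Hf$, keeping careful track of which constants are allowed to blow up with $c$.

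First I would record the field bounds. By Hypothesis \ref{hyp-G} and the standard relative bounds for $a(\cdot)$ and $\ad(\cdot)$ in terms of $\Hf^{1/2}$ (see, e.g., \cite{MatteStockmeyer2009a}), there is a constant depending only on $d_{-1},\dots,d_2$ — in particular \emph{not} on $c$ — with $\|\V{A}(\bx)\vp\|^2+\|\V{B}(\bx)\vp\|^2\le\const\,(\|\Hf^{1/2}\vp\|^2+\|\vp\|^2)$ for $\vp\in\core$. Expanding $\pa^2=(\bp+\V{A})^2$ in the Coulomb gauge and using Young's inequality then gives, for each $\eta\in(0,1)$ and with $c$-independent $C_\eta,C_1$, the form inequalities $(1-\eta)\bp^2-C_\eta(\Hf+1)\le T\le 2\bp^2+C_1(\Hf+1)$ on $\core$. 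Taking square roots via the Heinz--L\"owner inequalities (operator monotonicity and subadditivity of $t\mapsto t^{1/2}$) yields
\[
\sqrt{1-\eta}\,|\bp|-\sqrt{C_\eta}\,(\Hf+1)^{1/2}\;\le\;T^{1/2}\;\le\;\sqrt{2}\,|\bp|+\sqrt{C_1}\,(\Hf+1)^{1/2}\qquad\text{on }\core ,
\]
and in what follows I absorb the $(\Hf+1)^{1/2}$ terms into $\Hf$ via $\lambda(\Hf+1)^{1/2}\le\tfrac14\Hf+\const(\lambda)$.

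Next I would use two elementary properties of $f_c$: for $s\ge 0$, $f_c(s)\le c\sqrt{s}$; and, for any $\kappa>0$, $f_c(s)\ge\kappa\sqrt{s}-m\kappa^2$ provided $c\ge 2\kappa$ (split at $\sqrt{s}=2mc$: for $\sqrt s\ge 2mc$ use $f_c(s)\ge c\sqrt s-mc^2\ge\tfrac c2\sqrt s\ge\kappa\sqrt s$; for $\sqrt s\le 2mc$ use $f_c(s)\ge s/(4m)$ and $\tfrac{s}{4m}-\kappa\sqrt s+m\kappa^2=\tfrac1{4m}(\sqrt s-2m\kappa)^2\ge 0$). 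Being pointwise bounds on $\spec(T)\subseteq[0,\infty)$, these pass to operator inequalities for $f_c(T)$ by the functional calculus. For $V=0$: the bound $f_c(T)\le cT^{1/2}$ together with Step~1 gives $\Hr{}\le\const(c)\,(|\bp|+\Hf+1)$ on $\core$, hence $\form(|\bp|+\Hf)\subseteq\form(\Hr{})$; and $f_c(T)\ge\tfrac12 T^{1/2}-\tfrac m4$ (i.e.\ $\kappa=\tfrac12$, using $c\ge1$) together with Step~1 for $\eta=\tfrac12$ and the absorption of $(\Hf+1)^{1/2}$ gives
\[
\Hr{}=f_c(T)+\Hf\;\ge\;\mu_0\,(|\bp|+\Hf)-\nu_0\qquad\text{on }\core ,
\]
with $\mu_0>0$ and $\nu_0$ \emph{independent of $c$}. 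Since $\core$ is, by construction, a form core for $\Hr{}$, this identifies $\form(\Hr{})=\form(|\bp|+\Hf)$ and yields $\Hr{}\ge-\nu_0$ uniformly in $c$.

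Finally, the potential. From Hypothesis \ref{hyp-V} one has $V^2\le a^2\bp^2+b^2\le(a|\bp|+b)^2$ as forms, so operator monotonicity of the square root gives $|V|\le a|\bp|+b$, whence $|\langle\vp,V\vp\rangle|\le\langle\vp,|V|\vp\rangle\le a\langle\vp,|\bp|\vp\rangle+b\|\vp\|^2$. Now I apply the previous scheme with $\kappa=2a$ (if $a=0$ there is nothing to prove): for $c\ge\max\{1,4a\}$ one gets $f_c(T)\ge 2a\,T^{1/2}-4ma^2$, and combining with Step~1 for $\eta=\tfrac12$ and absorbing $(\Hf+1)^{1/2}$ yields $\Hr{}\ge a\sqrt2\,|\bp|+\tfrac12\Hf-D_a$ on $\core$ with $D_a$ independent of $c$; hence $\langle\vp,|\bp|\vp\rangle\le\tfrac1{a\sqrt2}(\langle\vp,\Hr{}\vp\rangle+D_a\|\vp\|^2)$ and therefore
\[
|\langle\vp,V\vp\rangle|\;\le\;\tfrac1{\sqrt2}\,\langle\vp,\Hr{}\vp\rangle+\Big(\tfrac{D_a}{\sqrt2}+b\Big)\|\vp\|^2,\qquad\vp\in\core ,
\]
a relative form bound $\tfrac1{\sqrt2}<1$ with constants independent of $c$. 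Since $\core$ is a form core for $\Hr{}$, this extends to $\form(\Hr{})$, and the KLMN theorem (equivalently, the Friedrichs construction applied to $\Hr{V}\!\upharpoonright\!\core$) produces a unique self-adjoint $\Hr{V}$ with $\form(\Hr{V})=\form(\Hr{})=\form(|\bp|+\Hf)$, with $\dom(\Hr{V})\subseteq\form(\Hr{V})$ automatically, with $\core$ a form core (the form norms being equivalent), and with $\Hr{V}\ge-(\tfrac{D_a}{\sqrt2}+b)$ — a bound independent of $c$. I expect the main obstacle to be precisely the tension in this last step: identifying the form domain and obtaining a $c$-uniform lower bound forces a lower bound $\Hr{}\ge\mu_0(|\bp|+\Hf)-\nu_0$ with a \emph{fixed} coefficient $\mu_0$, whereas dominating $V$ — whose relative $|\bp|$-bound $a$ may be arbitrarily large — requires the $|\bp|$-coefficient in a lower bound for $\Hr{}$ to \emph{exceed} $a$; these demands are reconciled only for $c$ large, and the quantitative device that achieves this is the inequality $f_c(s)\ge\kappa\sqrt s-m\kappa^2$, valid for $c\ge 2\kappa$, whose additive constant does not depend on $c$.
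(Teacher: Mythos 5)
Your overall architecture is the same as the paper's: bound $|\bp|$ from above by $\da+\Hf$ with a coefficient that can be made to beat the constant $a$ of Hypothesis \ref{hyp-V} once $c$ is large, deduce a relative form bound $<1$ for $V$ with $c$-independent constants, apply KLMN, and identify $\form(\Hr{})=\form(|\bp|+\Hf)$ by equivalence of form norms. The scalar inequalities for $f_c$, the treatment of $|V|\le a|\bp|+b$, and the KLMN endgame are all fine and mirror \eqref{eq:51} and Corollary \ref{boundV}. However, there is a genuine gap exactly at the step that carries all the weight: from the form inequality $(1-\eta)\bp^2\le T+C_\eta(\Hf+1)$ you pass to $\sqrt{1-\eta}\,|\bp|\le T^{1/2}+\sqrt{C_\eta}\,(\Hf+1)^{1/2}$ by invoking ``operator monotonicity and subadditivity of $t\mapsto t^{1/2}$''. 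Operator monotonicity only gives $\sqrt{1-\eta}\,|\bp|\le\big(T+C_\eta(\Hf+1)\big)^{1/2}$; the further split $\big(X+Y\big)^{1/2}\le X^{1/2}+Y^{1/2}$ is \emph{false} for general non-commuting positive operators, and $T=(\valpha\cdot\pa)^2$ does not commute with $\Hf$ (the field operators in $\V{A}$, $\V{B}$ do not commute with $\Hf$). A $2\times2$ counterexample: with $X=\begin{pmatrix}1&1\\1&1\end{pmatrix}$ and $Y=\begin{pmatrix}1&0\\0&0\end{pmatrix}$ one has $\big\langle e_2,(X+Y)^{1/2}e_2\big\rangle=2/\sqrt5>1/\sqrt2=\big\langle e_2,(X^{1/2}+Y^{1/2})e_2\big\rangle$, so $(X+Y)^{1/2}\not\le X^{1/2}+Y^{1/2}$. (Your upper bound $T^{1/2}\le\sqrt2\,|\bp|+\sqrt{C_1}(\Hf+1)^{1/2}$ is unproblematic only because $\bp^2$ and $\Hf$ do commute; it is the lower bound on $T^{1/2}$, i.e.\ the upper bound on $|\bp|$, that is needed to dominate $V$, to get the $c$-uniform lower bound, and to prove $\form(\Hr{})\subseteq\form(|\bp|+\Hf)$.)

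This is precisely the difficulty the paper resolves with Lemma \ref{formbound}, whose proof hinges on the nontrivial commutator-positivity estimate of Lemma \ref{mattestock1} (taken from \cite{MatteStockmeyer2009a} and proved by pull-through/commutator bounds): $\Re\SPb{|\DAT|\vp}{\HT\vp}\grg(1-\tfrac{10\pi d_1}{E^{1/2}})\||\DAT|^{1/2}\HT^{1/2}\vp\|^2\ge0$ for $E$ large. This positivity allows one to insert the cross term and write $\bp^2/(1+\ve)\le(|\DAT|+\delta\HT)^2+(k/2\ve\delta)^2$, a perfect square plus a \emph{commuting} constant, to which operator monotonicity of the square root can legitimately be applied, yielding \eqref{eq:46} with the small factor $\sqrt{(1+\ve)(\ve+1/c^2)}$ in front of $\da+\Hf$. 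Your proposal has no substitute for this ingredient; without it the chain of inequalities leading to $\Hr{}\ge a\sqrt2\,|\bp|+\tfrac12\Hf-D_a$ (and hence the KLMN step and the $c$-uniform lower bound) is unsupported. To repair the argument you would need either to import Lemma \ref{mattestock1} (or the diamagnetic-inequality route mentioned in the Remark after Theorem \ref{sa}) or to find another device controlling the non-commutativity of $|\DAT|$ and $\Hf$.
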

\begin{remark}
We note that a similar statement (without the specification of the form domain) was proved in \cite{MatteStockmeyer2009a} by means of a diamagnetic inequality. Our proof is based on an explicit bound for $|\bp|$ in terms of 
$\Hr{}$ (see Lemma \ref{formbound}). This bound  turns out to give the key ingredient to show \eqref{fd}.
\end{remark}
Next we define the (non-relativistic) Pauli-Fierz operator on 
$\core$ as
\begin{equation}
  \label{eq:25}
  \Hn^V:=\frac{\pa^2}{2m}+\frac{1}{2m}{\bf \Sigma}\cdot{\bf B}+V+\Hf\,,
\qquad\Hn\equiv \Hn^0\,.
\end{equation}
Actually, the operator defined  above is a two-fold
copy of the usual Pauli-Fierz operator given by $\thh:={\pa^2}/(2m)+{\bf \sigma}\cdot{\bf B}/(2m)+\Hf$ acting on $L^2(\R^3, \C^2)\otimes\mathcal{F}$:
$$\Hn^V=\left(\begin{array}{cc}
\thh+V_1 & 0_{2\times2}\\
0_{2\times 2}&\thh+V_2
\end{array}\right). $$
A potential  $V$  satisfying Hypothesis \ref{hyp-V} is  relatively  $\bp^2$-bounded
with bound zero, hence  $\Hn^V$ is essentially self-adjoint on any core 
for $\bp^2+\Hf$ and its self-adjoint extension has  domain $\dom(\Hn^V)=
\dom(\bp^2+\Hf)$ (see
\cite{HaslerHerbst2007b,Hiroshima2002}). Henceforth, we set 
\begin{equation}
  \label{eq:2}
 \TA=\frac{\pa^2}{2m}+\frac{1}{2m}{\bf \Sigma}\cdot{\bf B}\,.
\end{equation}
We  can now state the main result of this article whose proof can be found at the end of Section \ref{nrl}.
\begin{theorem}\label{main}
  Assume that ${\mathbf G}$ and $V$ satisfy Hypotheses \ref{hyp-G} and
  \ref{hyp-V} respectively. Then, for any $z\in \C\setminus\R\,$,
  \begin{equation*}
    \label{eq:6}
    \lim_{c\to\infty} \left\|(\Hr{V}-z)^{-1}- (\Hn^V-z)^{-1}\right\|=0\,.
  \end{equation*}
\end{theorem}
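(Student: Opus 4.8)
The plan is to reduce matters to an exact algebraic relation between the semi‑relativistic kinetic operator and its non‑relativistic limit. Write $\hat d_{\V A}:=|\DA|-\mc$, so that $\Hr{V}=\hat d_{\V A}+V+\Hf$ on $\core$ (cf.\ \eqref{eq:23}) and $\Hn^V=\TA+V+\Hf$; by \eqref{eq:20}, $\hat d_{\V A}=\sqrt{c^2T+(\mc)^2}-\mc\ge0$, where $T:=(\bp+\V A)^2+{\bf\Sigma}\cdot{\bf B}\ge0$ and $\TA=\tfrac1{2m}T$. Squaring $|\DA|=\hat d_{\V A}+\mc$ gives $\hat d_{\V A}^{\,2}+2\mc\,\hat d_{\V A}=c^2T$, i.e.\ the exact identity
\begin{equation*}
\Hn^V-\Hr{V}\;=\;\TA-\hat d_{\V A}\;=\;\frac{\hat d_{\V A}^{\,2}}{2\mc}\;.
\end{equation*}
Together with $\hat d_{\V A}\ge0$ this furnishes the two elementary bounds $0\le\hat d_{\V A}\le\TA=\tfrac1{2m}T$ and $0\le\hat d_{\V A}^{\,2}\le2\mc\,\TA=c^2T$. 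Since $(2\mc)^{-1}\hat d_{\V A}^{\,2}\le\tfrac1{2m}T$ is then $(\bp^2+\Hf)$-bounded, a standard closure argument along $\core$ gives $\dom(\Hn^V)=\dom(\bp^2+\Hf)\subset\dom(\Hr{V})$ with $\Hr{V}=\Hn^V-(2\mc)^{-1}\hat d_{\V A}^{\,2}$ there; hence, for $z\in\C\setminus\R$, the second resolvent identity yields
\begin{equation*}
(\Hr{V}-z)^{-1}-(\Hn^V-z)^{-1}\;=\;\frac1{2\mc}\,(\Hr{V}-z)^{-1}\,\hat d_{\V A}^{\,2}\,(\Hn^V-z)^{-1}\;.
\end{equation*}

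The key move is now to split $\hat d_{\V A}^{\,2}=\hat d_{\V A}^{1/2}\,\hat d_{\V A}^{3/2}$ \emph{unevenly}: because $(\Hn^V-z)^{-1}$ maps into the full operator domain $\dom(\bp^2+\Hf)$, whereas $(\Hr{V}-z)^{-1}$ is controlled only at the form level $\form(|\bp|+\Hf)$ (Theorem \ref{sa}), one attaches $\hat d_{\V A}^{3/2}$ to the former and $\hat d_{\V A}^{1/2}$ to the latter, obtaining
\begin{equation*}
\bigl\|(\Hr{V}-z)^{-1}-(\Hn^V-z)^{-1}\bigr\|\;\le\;\frac1{2\mc}\,\bigl\|\hat d_{\V A}^{1/2}(\Hr{V}-\bar z)^{-1}\bigr\|\;\bigl\|\hat d_{\V A}^{3/2}(\Hn^V-z)^{-1}\bigr\|\;.
\end{equation*}
For the second factor, multiplying the two bounds above gives $\hat d_{\V A}^{\,3}\le c^2T\cdot\tfrac1{2m}T=(2m)^{-1}c^2T^2$, hence $\hat d_{\V A}^{3/2}\le(2m)^{-1/2}\,c\,T$, so that $\|\hat d_{\V A}^{3/2}(\Hn^V-z)^{-1}\|\le(2m)^{-1/2}\,c\,\|T(\Hn^V-z)^{-1}\|$ with $\|T(\Hn^V-z)^{-1}\|$ finite and independent of $c$ (as $T$ is $(\bp^2+\Hf)$-bounded). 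For the first factor, $\hat d_{\V A}\le\hat d_{\V A}+\Hf=\Hr{0}$ gives $\|\hat d_{\V A}^{1/2}(\Hr{V}-\bar z)^{-1}\|\le\|\Hr{0}^{1/2}(\Hr{V}-\bar z)^{-1}\|$, which is bounded provided $\Hr{0}$ is form‑dominated by $\Hr{V}+\const$ uniformly in $c$. Altogether the right‑hand side is $O(1/c)$, which is the assertion.

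The one non‑routine ingredient, and the place I expect real work, is the \emph{uniform‑in‑$c$} estimate $\|\hat d_{\V A}^{1/2}(\Hr{V}-\bar z)^{-1}\|\le\const$, equivalently $\Hr{0}\le\const\,(\Hr{V}+C)$ uniformly for $c$ large. This is exactly what Theorem \ref{sa} and the explicit bound of Lemma \ref{formbound} (controlling $|\bp|$ by $\Hr{0}$) are designed to give: for $c$ large, $V$ is $\Hr{0}$-form bounded with relative bound $<1$ and with constants independent of $c$, whence $\langle\varphi,\Hr{0}\varphi\rangle\le\const\,\langle\varphi,(\Hr{V}+C)\varphi\rangle$ for all $\varphi$ in the common form domain; combining this with $\|(\Hr{V}+C)^{1/2}(\Hr{V}-\bar z)^{-1}\|\le\sup_{t\ge-C}(t+C)^{1/2}|t-\bar z|^{-1}<\infty$ --- finite and $c$-independent since $\spec(\Hr{V})\subset[-C,\infty)$ uniformly, once more by Theorem \ref{sa} --- yields the required bound. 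What remains is bookkeeping: that $\hat d_{\V A}^{1/2}(\Hr{V}-\bar z)^{-1}$ and $\hat d_{\V A}^{3/2}(\Hn^V-z)^{-1}$ are genuine bounded operators (from $\form(\Hr{V})=\form(\Hr{0})\subset\dom(\hat d_{\V A}^{1/2})$ and from $\hat d_{\V A}^{\,3}\le(2m)^{-1}c^2T^2$ on $\dom(T)\supset\dom(\bp^2+\Hf)$), and that the factorization $\hat d_{\V A}^{\,2}=\hat d_{\V A}^{1/2}\hat d_{\V A}^{3/2}$ holds on the relevant domains --- both immediate from the spectral theorem for $T$.
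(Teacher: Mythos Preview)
Your proof is correct and follows the same strategy as the paper: the algebraic identity $\TA-\da=\da^{\,2}/(2\mc)$ (the paper's Lemma~\ref{lemma1}), the resulting resolvent relation (the paper's Lemma~\ref{factorHf}), and the uneven split $\da^{\,2}=\da^{1/2}\da^{3/2}$ with the half-power absorbed on the semi-relativistic side and the three-halves power on the Pauli--Fierz side.

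Your execution differs from the paper's in two places, both mild streamlinings. First, the paper reduces to a real spectral parameter $\rho$ and then invokes Tiktopolus' formula $\ra{\rho}=\raf^{1/2}(1+\raf^{1/2}V\raf^{1/2})^{-1}\raf^{1/2}$ together with Corollary~\ref{boundV} to control $\da^{1/2}\ra{\rho}$; you instead use the form inequality $\Hr{0}\le(1-\lambda)^{-1}(\Hr{V}+\const)$ (which is exactly what Corollary~\ref{boundV} encodes) directly at complex $z$, bypassing both the reduction and Tiktopolus. Second, for $\|\da^{3/2}\ran{z}\|$ the paper uses $f(t)^{2\nu}\le(c|t|)^{2\nu}$ combined with Young's inequality to get $\bigO(c^{3/2})$, whereas your pointwise bound $f(s)^3\le f(s)^2\cdot f(s)\le c^2s^2\cdot\tfrac{s^2}{2m}$ gives $\da^{3/2}\le c(2m)^{-1/2}(\valpha\cdot\pa)^2$ and hence $\bigO(c)$. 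The net effect is that you obtain the rate $\bigO(c^{-1})$ for the resolvent difference, slightly better than the paper's $\bigO(c^{-1/2})$. The domain bookkeeping you flag (that $\dom(\bp^2+\Hf)\subset\dom(\Hr{V})$ via closedness of $\Hr{V}$ and $(\bp^2+\Hf)$-boundedness of $\da^{\,2}$) is indeed routine and is exactly how the paper justifies Lemma~\ref{factorHf}.
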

\noindent
The rest of the paper is organized as follows: The main technical ingredients and the proof of Theorem \ref{sa} are presented in Section \ref{estimates}. In Section \ref{nrl} we prove Theorem \ref{main} about the non-relativistic limit. Finally the 
main text is followed by an Appendix were we prove some known inequalities.
Throughout this paper we shall assume that Hypotheses \ref{hyp-G} and 
\ref{hyp-V} are fulfilled.
\section{Main estimates}\label{estimates}
The following estimates are well known. A proof of the lemma can 
be found in Appendix \ref{ap1}.
\begin{lemma}[Field-Bounds]
  \label{field-bounds}
Assume that $f, f\omega^{-1/2}\in \mathfrak{h}$. Then, for all 
$\vp \in \dom(\Hf^{1/2})$,
\begin{align}
  \label{eq:24}
  \|a(f)\vp\|^2&\le   \|f/\omega^{1/2}\|_{\mathfrak{h}}^2   \, \|\Hf^{1/2}\vp\|^2\,,\\\label{eq:24b}
     \|\ad(f)\vp\|^2&\le   \|f/\omega^{1/2}\|_{\mathfrak{h}}^2   \, \|\Hf^{1/2}\vp\|^2         +  \|f\|^2_{\mathfrak{h}}    \|\vp\|^2\,.
\end{align}
In particular, for all 
$\vp \in \dom(\Hf^{1/2})$, 
\begin{align}
  \label{eq:1}
  \SPb{\vp}{{\bf \Sigma}\cdot{\bf B}\vp}&\le 2d_1\|\Hf^{1/2}\vp\|\|\vp\|\,,\\
\label{eq:1.1}\SPb{\V{A}\vp}{ \V{A} \vp}&\le 2d_{-1}^2 \|\Hf^{1/2}\vp\|^2+d_0^2\|\vp\|^2\,.
\end{align}
\end{lemma}
For notational convenience we set
\begin{equation}
  \label{eq:5}
  \da:=|\DA|-mc^2=\{c^2(\al\cdot\pa)^2+(\mc)^2\}^{1/2}-\mc\,,
\end{equation}
acting on $L^2(\R^3,\C^4)\otimes\mathcal{F}$. (Note that $\da=\hat{d}_{\V{A}}\oplus\hat{d}_{\V{A}} $, where $\hat{d}_{\V{A}}$ is defined in \eqref{@}.)
Furthermore, 
\begin{eqnarray}
  \label{eq:40}
  &&\ra{z}:=(\Hr{V}+z)^{-1}\,,\qquad\RA:= R^{0}_{c}(z)\,,\\
&&\ran{z}:=(\Hn^V+z)^{-1}\,,\qquad\RAn:= R^{0}_{\infty}(z)\,,
\end{eqnarray}
for some $z\in\C$ with $-z\in\varrho(\Hr{V})\cap \varrho(\Hn^V)$, where $\varrho(\cdot)$ denotes the resolvent set. 
\begin{lemma}\label{lemma1}
For any $\vp\in \dom(\TAA)$, we have 
\begin{equation}
  \label{eq:4}
  \TA\vp=\big(\da+\frac{\da^2}{2mc^2}\big)\vp\,.
\end{equation}
Moreover, for $\nu \in (0,2]$, we have the relation 
$\dom(\Hn^V)=\dom(\bp^2+\Hf)\subset\dom(\da^{\,\nu})$ and the estimate, for any 
$-z\in\varrho(\Hn^V)$,
\begin{equation}
  \label{eq:13}
  \|\da^{\,\nu} \ran{z}\|= \bigO(c^\nu)\,.
\end{equation}
\end{lemma}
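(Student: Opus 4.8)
The plan is to deduce both assertions from the relation $\DA^{2} = c^{2}(\al\cdot\pa)^{2} + (mc^{2})^{2} = 2mc^{2}\,\TAA + (mc^{2})^{2}$ of \eqref{eq:20} and \eqref{eq:2}, together with the spectral calculus of $|\DA|$. The decisive structural point is that, once this relation is available on a large enough domain, $\da$ and $\TAA$ are \emph{commuting} functions of $|\DA|$; this yields \eqref{eq:4} at once and makes the $c$-dependence in \eqref{eq:13} transparent.

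\textbf{Step 1 (the identity \eqref{eq:4}).} I would introduce $L_{c} := (2mc^{2})^{-1}\bigl(\DA^{2} - (mc^{2})^{2}\bigr)$. Being a function of $|\DA|$, it is self-adjoint on $\dom(L_{c}) = \dom(\DA^{2}) = \dom(\da^{2})$, and $L_{c} \ge 0$ because $\DA^{2} \ge (mc^{2})^{2}$ by \eqref{eq:20}. The scalar identity $(t - mc^{2}) + (2mc^{2})^{-1}(t - mc^{2})^{2} = (2mc^{2})^{-1}(t^{2} - (mc^{2})^{2})$, applied through the functional calculus of $|\DA|$, shows that on $\dom(\da^{2})$ one has $L_{c} = \da + (2mc^{2})^{-1}\da^{2}$. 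On the other hand, by \eqref{eq:20} and \eqref{eq:2}, $L_{c}$ coincides with $\TAA$ on the core $\core$; since $L_{c}$ is closed and, by Hypothesis \ref{hyp-V} and Lemma \ref{field-bounds}, $\TAA$ is the closure of its restriction to $\core$ with $\dom(\TAA) = \dom(\bp^{2}+\Hf)$, we conclude $\TAA \subseteq L_{c}$. Thus $\dom(\Hn^{V}) = \dom(\bp^{2}+\Hf) = \dom(\TAA) \subseteq \dom(\da^{2})$, and on $\dom(\TAA)$, $\TAA = L_{c} = \da + (2mc^{2})^{-1}\da^{2}$, which is \eqref{eq:4}.

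\textbf{Step 2 (the domain inclusion and the bound \eqref{eq:13}).} The whole of this step rests on the elementary inequality $(t - mc^{2})^{2} \le t^{2} - (mc^{2})^{2}$ for $t \ge mc^{2}$, i.e. $\da^{2} \le 2mc^{2}\,L_{c}$. Since $\da^{2}$ and $2mc^{2}L_{c}$ are functions of $|\DA|$ obeying this inequality pointwise on $\spec(|\DA|)$, applying $s \mapsto s^{\nu}$ in the functional calculus yields, for every $\nu \in (0,2]$,
\[
  \da^{\,2\nu} \;\le\; (2mc^{2})^{\nu}\,L_{c}^{\,\nu}\,.
\]
As $\nu/2 \le 1$ we have $\dom(L_{c}) \subseteq \dom(L_{c}^{\,\nu/2})$, and the displayed form inequality gives $\dom(L_{c}^{\,\nu/2}) \subseteq \dom(\da^{\,\nu})$; hence $\dom(\Hn^{V}) = \dom(\bp^{2}+\Hf) = \dom(\TAA) \subseteq \dom(L_{c}) \subseteq \dom(\da^{\,\nu})$, the asserted inclusion. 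For $\vp \in \dom(\bp^{2}+\Hf)$, using $s^{\nu} \le s^{2} + 1$ for $s \ge 0$ and $L_{c}\vp = \TAA\vp$ (from $\TAA \subseteq L_{c}$),
\[
  \|\da^{\,\nu}\vp\|^{2} = \SPn{\vp}{\da^{\,2\nu}\vp} \le (2mc^{2})^{\nu}\SPn{\vp}{L_{c}^{\,\nu}\vp} \le (2mc^{2})^{\nu}\bigl(\|\TAA\vp\|^{2} + \|\vp\|^{2}\bigr)\,.
\]
Taking $\vp = \ran{z}\psi$ and noting that $\|\TAA\,\ran{z}\|$ and $\|\ran{z}\|$ are finite and independent of $c$ — $\TAA$, $\Hn^{V}$ and $\ran{z}$ do not involve $c$, and $\TAA$ is $\Hn^{V}$-bounded since $\dom(\Hn^{V}) = \dom(\bp^{2}+\Hf) \subseteq \dom(\TAA)$ — we obtain $\|\da^{\,\nu}\ran{z}\| \le (2mc^{2})^{\nu/2}\bigl(\|\TAA\,\ran{z}\|^{2} + \|\ran{z}\|^{2}\bigr)^{1/2} = \bigO(c^{\nu})$.

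The one genuinely delicate point is the domain bookkeeping in Step 1: upgrading \eqref{eq:20}, which is stated only on $\core$, to the operator inclusion $\TAA \subseteq L_{c}$ with $L_{c}$ an honest function of $|\DA|$. This is precisely what renders automatic the commutativity used in Step 2, which is indispensable for raising $\da^{2} \le 2mc^{2}L_{c}$ to a power $\nu > 1$, where operator monotonicity of $s \mapsto s^{\nu}$ is no longer at our disposal. It rests on the essential self-adjointness of $\Hn^{V}$ (hence of $\TAA$ up to closure) on $\core$ and on the relative bounds of Lemma \ref{field-bounds} and Hypothesis \ref{hyp-V}; everything afterwards is a soft consequence of the spectral theorem.
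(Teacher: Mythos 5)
Your Step 1 identity and the functional-calculus comparison in Step 2 are, in substance, the paper's own argument (the paper works with functions of $\al\cdot\pa$, $f(t)=\sqrt{(ct)^2+(mc^2)^2}-mc^2$, and Young's inequality, whereas you use functions of $|\DA|$ and $(t-mc^2)^2\le t^2-(mc^2)^2$; that difference is cosmetic). The genuine gap is exactly the point you flag as delicate: you assert that, ``by Hypothesis \ref{hyp-V} and Lemma \ref{field-bounds}'', the closure of $\TAA\upharpoonright\core$ has domain $\dom(\bp^2+\Hf)$. Hypothesis \ref{hyp-V} concerns only the potential $V$ and is irrelevant here, and Lemma \ref{field-bounds} gives only $\Hf^{1/2}$-bounds on $a(f)$ and $\ad(f)$, i.e.\ form bounds; neither yields the operator bound on $\pa^2+{\bf \Sigma}\cdot\V{B}$ relative to $\bp^2+\Hf$ that your claim requires. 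Worse, the claimed \emph{equality} of domains is false: already for $\V{G}=0$ the closure of $\bp^2/2m$ from $\core$ has domain strictly larger than $\dom(\bp^2+\Hf)$, because the graph norm of $\TAA$ does not control $\Hf$. What your argument actually needs is only the inclusion $\dom(\bp^2+\Hf)\subseteq\dom\big(\overline{\TAA\upharpoonright\core}\big)\subseteq\dom(L_c)$, equivalently an estimate $\|\TAA\vp\|\le C\big(\|(\bp^2+\Hf)\vp\|+\|\vp\|\big)$ for $\vp\in\core$ with $C$ independent of $c$; this same estimate is also what underlies your closed-graph claim that $\|\TAA\,\ran{z}\|$ is finite and $c$-independent, so it cannot be waved through.

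This inclusion is not a soft consequence of the spectral theorem; it is precisely where the paper invests its effort. The repair is the paper's estimate \eqref{eq:43}: by the cited domain theorem \eqref{dom}, $\dom(\Hn)=\dom(\bp^2+\Hf)$ (Hiroshima, Hasler--Herbst), and the closed graph theorem, one has $\|\Hn\vp\|\le k'\big(\|(\bp^2+\Hf)\vp\|+\|\vp\|\big)$ on $\core$; hence
\begin{equation*}
\|\TAA\vp\|\le\|(\TAA+\Hf)\vp\|+\|\Hf\vp\|\le (k'+1)\big(\|(\bp^2+\Hf)\vp\|+\|\vp\|\big)\,,
\end{equation*}
where $\|\Hf\vp\|\le\|(\bp^2+\Hf)\vp\|$ because $\bp^2$ and $\Hf$ are commuting nonnegative operators. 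Since $\core$ is a core of $\bp^2+\Hf$ (and of $\Hn^V$), this gives $\dom(\bp^2+\Hf)\subseteq\dom\big(\overline{\TAA\upharpoonright\core}\big)\subseteq\dom(L_c)=\dom(\da^2)$ together with the $c$-independent bound on $\|\TAA\,\ran{z}\|$. With this inserted in place of your unsupported domain claim, the rest of your proposal (the pointwise inequalities in the functional calculus of $|\DA|$, the $c$-independence of $L_c$, and the final $\bigO(c^\nu)$ bookkeeping) is correct and coincides with the paper's proof up to the choice of scalar inequalities.
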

\begin{proof}
  Let $f: \R\to[0,\infty)$ with
  $f(t)=\sqrt{(ct)^2+(mc^2)^2}-mc^2$.  Obviously, 
  $t^2/2m=f(t)+f(t)^2/(2\mc)$.  Since $\da=f(\al\cdot\pa)$, we
  find \eqref{eq:4} by the functional calculus.

  Clearly, we have, for all $\nu\in(0,2]$, the inequality
  $f(t)^{2\nu}\le(c |t|)^{2\nu}$, which implies, for $\vp\in\core$, that
$\SPb{\vp}{\da^{2\nu}\vp}\le c^{2\nu}\SPb{\vp}{|\al\cdot\pa|^{2\nu}\vp}\,.$
In view of  Young's inequality [$ab\le a^p/p+b^q/q$, for $p,q\in(1,\infty)$ with
$1/p+1/q=1$], we obtain, for $t\in\R$ and $\nu\in(0,2)$, that $|t|^{2\nu}\le \nu t^4/2+(2-\nu)/2.$ Hence, for any $\nu\in(0,2]$, we get
\begin{equation}
  \label{eq:14}
  \SPb{\vp}{\da^{2\nu}\vp}\le  c^{2\nu} \nu/2 \SPb{\vp}{[(\al\cdot\pa)^{4}+(2-\nu)/\nu]\vp}\,.
\end{equation}
Recall that $(\al\cdot\pa)^{4}=(2m)^2 \TA^2$. Due to \eqref{eq:14} and the fact that 
\begin{equation}\label{dom}
\dom(\Hn^V)=\dom(\bp^2+\Hf)\,,
\end{equation}
we see that, in order to show \eqref{eq:13}, it suffices to prove that there exists a constant $k>0$, such that for any $\vp \in \core$,
$\|\TA\vp\|^2\le k(\|(\bp^2+\Hf)\vp\|^2+ \|\vp\|^2).$
That this is indeed the case follows from the following consideration: By \eqref{dom}, there exists a constant $k'>1$, such that $\|\Hn\vp\|^2\le k' (\|(\bp^2+\Hf)\vp\|^2+ \|\vp\|^2) $, thus, 
\begin{equation}
  \label{eq:43}
\begin{split}
  \|\TA\vp\|^2&\le 2\big( \|(\TA+\Hf)\vp\|^2+\|\Hf\vp\|^2\big)\\
 &\le 2k'\big( \|(\bp^2+\Hf)\vp\|^2+\|\Hf\vp\|^2+ \|\vp\|^2\big)\\
&\le 2\sqrt{2}k'\big( \|(\bp^2+\Hf)\vp\|^2+ \|\vp\|^2\big)\,.
\end{split}
\end{equation}
This finishes the proof.
\end{proof}
The next Lemma is proved in \cite{MatteStockmeyer2009a}. We sketch its proof in  Appendix \ref{ap1}. We define, for 
some $E>0$, the operator
\begin{equation}
  \label{eq:26}
  \wt{H}_f:=\Hf+E\,.
\end{equation}
\begin{lemma}[\cite{MatteStockmeyer2009a}]
\label{mattestock1}
 Assume that $E\ge (10\pi d_1)^2/(mc)^2$. Then, for all
$\vp\in\core$,
\begin{equation}\label{comm-posit}
\Re\SPb{|\DA|\vp}{\HT\,\vp}\,\grg\,
(1-\frac{10\pi d_1}{mc E^{1/2}})\,\big\|\,|\DA|^{1/2}\,\wt{H}^{1/2}_f\,\vp\,\big\|^2
\,.
\end{equation}
\end{lemma}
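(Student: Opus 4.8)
The plan is to rewrite $\Re\SPb{|\DA|\vp}{\HT\,\vp}$ as a manifestly non-negative principal part plus a commutator correction, and then to bound the correction by the field estimates of Lemma~\ref{field-bounds}. The obstruction to $|\DA|$ and $\HT$ ``commuting'' is localized: every term of $\DA^2=c^2(\al\cdot\pa)^2+(mc)^2$ commutes with $\HT=\Hf+E$ except the pieces containing the quantized potential $\V{A}$, and the relevant commutator $[\DA,\Hf]=c\,\al\cdot[\V{A},\Hf]=c\,\al\cdot\big(a(\omega\V{G}_\bx)-\ad(\omega\V{G}_\bx)\big)$ is first order in creation/annihilation operators and small in the sense of \eqref{def-d3}.

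First I would establish, on $\core$ (a form core for $|\DA|$ and for $\Hf$, hence for $\HT$), the identity
\[
\Re\SPb{|\DA|\vp}{\HT\,\vp}=\big\||\DA|^{1/2}\HT^{1/2}\vp\big\|^2+\Re\SPb{[\HT^{1/2},|\DA|]\,\vp}{\HT^{1/2}\vp}\,,
\]
obtained by writing $\HT\vp=\HT^{1/2}(\HT^{1/2}\vp)$, moving one factor $\HT^{1/2}$ into the left slot, commuting it past $|\DA|$, and using that $\SPb{|\DA|\HT^{1/2}\vp}{\HT^{1/2}\vp}=\||\DA|^{1/2}\HT^{1/2}\vp\|^2$ is real and non-negative. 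In view of this identity it suffices to prove the commutator estimate
\[
\big|\Re\SPb{[\HT^{1/2},|\DA|]\,\vp}{\HT^{1/2}\vp}\big|\le\frac{10\pi d_1}{mc\,E^{1/2}}\,\big\||\DA|^{1/2}\HT^{1/2}\vp\big\|^2\,,
\]
and the hypothesis $E\ge(10\pi d_1)^2/(mc)^2$ is exactly what renders the coefficient $10\pi d_1/(mc\,E^{1/2})\le1$, i.e.\ makes the lower bound \eqref{comm-posit} non-vacuous.

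To control the correction I would peel the two square roots apart using the representation $S^{1/2}=\pi^{-1}\int_0^\infty S(S+t)^{-1}\,t^{-1/2}\,dt$, applied to $S=\DA^2\ge(mc)^2$ and then to $S=\HT\ge E$; this turns a commutator $[\,S^{1/2},X\,]$ into $\pi^{-1}\int_0^\infty t^{1/2}(S+t)^{-1}[S,X](S+t)^{-1}\,dt$, and iterating the device expresses $[\HT^{1/2},|\DA|]$ through the resolvents $(\DA^2+t)^{-1}$, $(\HT+s)^{-1}$, the bounded operator $\DA(\DA^2+t)^{-1}$, and the field commutator $[\Hf,\V{A}]=-a(\omega\V{G}_\bx)+\ad(\omega\V{G}_\bx)$. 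Inserting this into the correction term and estimating the integrand with the usual resolvent bounds ($\|(\DA^2+t)^{-1}\|\le((mc)^2+t)^{-1}$, $\|(\HT+s)^{-1}\|\le(E+s)^{-1}$, and so on), with $\Hf\le\HT$, with $\DA^2\ge c^2(\al\cdot\pa)^2$, and with Lemma~\ref{field-bounds} for the form factor $\omega\V{G}_\bx$ (controlled by the $\ell=1$ case of \eqref{def-d3}, $2\int\omega(k)\|\V{G}(k)\|_\infty^2\,dk\le d_1^2$), one is left with scalar $s$- and $t$-integrals that combine to a bound of the form $\mathrm{const}\cdot d_1\,(mc)^{-1}E^{-1/2}\,\||\DA|^{1/2}\HT^{1/2}\vp\|^2$; here $E^{-1/2}$ comes from the $\HT$-resolvent integration and $(mc)^{-1}$ from the $\DA$-resolvent integration together with the factor $c$ of $[\DA,\Hf]$ and the bound $\DA^2\ge c^2(\al\cdot\pa)^2$. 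One more point: both parts of $[\Hf,\V{A}]$ are handled by the annihilation bound \eqref{eq:24} alone, because any $\ad(\omega\V{G}_\bx)$ occurring inside an inner product can be moved onto the adjacent vector as $a(\omega\V{G}_\bx)$; thus the extra $\|\cdot\|^2$-term in \eqref{eq:24b}, which would bring in $d_2$, never enters.

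I expect the main obstacle to be organizational rather than conceptual. Counting powers of $|\DA|$ and $\HT$, the correction term is of the \emph{same order} as the principal term $\||\DA|^{1/2}\HT^{1/2}\vp\|^2$, so the smallness must be wrung entirely out of $d_1/(mc\,E^{1/2})$; this forces one to keep the outer vectors $\vp,\HT^{1/2}\vp$ in play throughout the estimate (rather than bounding inner blocks in operator norm), so that the $\Hf^{1/2}$-factors produced by the field bounds have partners and the $s$- and $t$-integrations converge. On top of that one must track all numerical factors — the $2$ from $\V{A}=a(\V{G}_\bx)+\ad(\V{G}_\bx)$, the two terms each commutator produces, the values of the Beta-type integrals — to land on the (non-sharp) constant $10\pi$, and justify the formal manipulations on $\core$ (interchanging the $s$- and $t$-integrals with the inner products, and closing by density where needed). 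Since this is precisely a lemma of \cite{MatteStockmeyer2009a}, one may in practice just quote it; the sketch along the above lines is carried out in Appendix~\ref{ap1}.
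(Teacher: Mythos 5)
Your skeleton — split $\Re\SPb{|\DA|\vp}{\HT\vp}$ into the principal term $\||\DA|^{1/2}\HT^{1/2}\vp\|^2$ plus a commutator correction, and show the correction is $\klg \tfrac{10\pi d_1}{mcE^{1/2}}$ times the principal term — is sound and is in fact the same skeleton as the paper's (there $|\DA|=\sgn(\DA)\DA$ and $\psi=\HT^{1/2}\phi$ play this role), and your power counting for why a factor $d_1/(mcE^{1/2})$ can be extracted is correct. The genuine gap is in the step you dismiss as ``usual resolvent bounds''. After the double square-root representation, the field commutator $c\,\valpha\cdot[\Hf,\V{A}]$ sits \emph{between} two resolvents $(\DA^2+s)^{-1}$, which in turn sit between the $(\HT+t)^{-1}$'s; the only $\HT$-powers available to feed the field bound \eqref{eq:24} are attached to the outer vectors $\vp$, $\HT^{1/2}\vp$, and they are separated from $a(\omega\V{G}_\bx)$, $\ad(\omega\V{G}_\bx)$ by $(\DA^2+s)^{-1}$ and $\DA(\DA^2+s)^{-1}$, which do \emph{not} commute with $\Hf$. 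No resolvent bound bridges this: $\valpha\cdot[\Hf,\V{A}]\,(\DA^2+s)^{-1}$ is not bounded, and $\Hf^{1/2}$ is not dominated by any power of $|\DA|$ (independent degrees of freedom). So your plan requires, and never identifies, a quantitative intertwining estimate of the type $\|\HT^{1/2}(\DA^2+s)^{-1}\HT^{-1/2}\|\klg ((mc^2)^2+s)^{-1}(1-2d_1/(mcE^{1/2}))^{-1}$ — i.e.\ exactly the second layer of commutator estimates the paper supplies via the pull-through bound \eqref{combound2}, the identity \eqref{eq:27} and the resulting bounds \eqref{eq:28}, \eqref{eq:31}, \eqref{eq:33}. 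This is the core of the argument, not an organizational nuisance, and it is also where the hypothesis on $E$ is actually consumed (one needs $E>(2d_1/(mc))^2$, and the paper's rough final step takes $E>4(2d_1)^2/(mc)^2$, for the Neumann-type denominator), not merely to make the final prefactor nonnegative as you suggest.

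Two further points. First, even your opening identity is not free: for $\vp\in\core$ one needs $|\DA|\vp\in\dom(\HT^{1/2})$ (equivalently, that $\sgn(\DA)$ preserves $\dom(\HT^{1/2})$), which is itself an $S_{1/2}$-type statement; the paper's arrangement with $\psi=\HT^{1/2}\phi$, the bounded operator $\sgn(\DA)$, and bounded commutators is designed precisely to avoid this. Second, your device of flipping $\ad(\omega\V{G}_\bx)$ onto the adjacent vector to avoid $d_2$ only works once that adjacent vector's $\Hf^{1/2}$-norm is controlled — the same adjacency problem — and the constant $10\pi$ is not verified by your sketch (any constant $\klg 10\pi$ would do, but this must be checked). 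Once the intertwining estimate is added, your route becomes essentially a variant of the paper's proof, with the representation $|\DA|=(\DA^2)^{1/2}=\pi^{-1}\int_0^\infty \DA^2(\DA^2+s)^{-1}s^{-1/2}ds$ replacing Kato's resolvent formula for $\sgn(\DA)$; neither choice removes the need for the pull-through machinery.
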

In the following Lemma we establish that the momentum $|\bp|$ is dominated, in the sense of quadratic forms, by the operator $\Hr{}\,$ (see \eqref{eq:39}).
\begin{lemma}\label{formbound}
For all $\ve\in(0,1]$ there a exist a constant $k_\ve>0$, such that 
 for all $c\ge 1$ and $\vp\in\core$, 
\begin{equation}
  \label{eq:46}
  \SPb{\vp}{|\bp|\vp}\le \sqrt{(1+\ve)(\ve+1/c^2)}\SPb{\vp}{(\da+\Hf)\vp}+k_\ve\|\vp\|^2\,.
\end{equation}
\end{lemma}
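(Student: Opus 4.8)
The plan is to bound $|\bp|$ from above by $\da+\Hf$ in two comparison steps followed by an optimisation of constants: (A) pass from $|\al\cdot\pa|$ to $\da$ by functional calculus; (B) pass from $|\bp|$ to $|\al\cdot\pa|$, absorbing the magnetic potential by a diamagnetic‑type argument, at the cost of a factor $1+\ve'$ and an $\Hf$‑controlled remainder; (C) combine the two, use $\da,\Hf\ge0$, and choose the free parameters so that the two comparison constants multiply to $\sqrt{(1+\ve)(\ve+1/c^{2})}$.

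\emph{Step (A).} From \eqref{eq:20} together with $|\DA|=\da+\mc$ one obtains the operator identity $(\al\cdot\pa)^{2}=\da^{2}/c^{2}+2m\,\da$ (equivalently \eqref{eq:4}). All operators occurring here are functions of $(\al\cdot\pa)^{2}$, hence commute, so the scalar inequalities $\sqrt{s}=\sqrt{s^{2}/c^{2}+2ms}\le s/c+\sqrt{2m\,s}$ and $\sqrt{2m\,s}\le\sqrt{2m}\,(\delta s+\tfrac1{4\delta})$ (for $\delta>0$) lift, by the spectral theorem, to
\begin{equation*}
|\al\cdot\pa|\ \le\ \big(\tfrac1c+\sqrt{2m}\,\delta\big)\,\da+\tfrac{\sqrt{2m}}{4\delta}\,\id ,
\end{equation*}
so in particular $\SPb{\vp}{|\al\cdot\pa|\vp}\le(\tfrac1c+\sqrt{2m}\,\delta)\,\SPb{\vp}{\da\vp}+\tfrac{\sqrt{2m}}{4\delta}\|\vp\|^{2}$ for $\vp\in\core$.

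\emph{Step (B), the main step.} Since $(\al\cdot\bp)^{2}=\bp^{2}$ we have $|\bp|=|\al\cdot\bp|$, and $\al\cdot\bp=\al\cdot\pa-\al\cdot\V{A}$. I would use the representation $\SPb{\vp}{|D|\vp}=\tfrac1\pi\int_{\R}\|D(D+i\mu)^{-1}\vp\|^{2}\,d\mu$, valid for self‑adjoint $D$ and $\vp\in\dom(|D|^{1/2})\supseteq\core$, together with the second resolvent identity, which gives
\begin{equation*}
\al\cdot\bp\,(\al\cdot\bp+i\mu)^{-1}\vp=\al\cdot\pa\,(\al\cdot\pa+i\mu)^{-1}\vp-i\mu\,(\al\cdot\bp+i\mu)^{-1}(\al\cdot\V{A})(\al\cdot\pa+i\mu)^{-1}\vp .
\end{equation*}
Inserting this into the integral and applying $2ab\le\ve'a^{2}+(\ve')^{-1}b^{2}$ yields, for every $\ve'>0$,
\begin{equation*}
\SPb{\vp}{|\bp|\vp}\le(1+\ve')\,\SPb{\vp}{|\al\cdot\pa|\vp}+(1+\tfrac1{\ve'})\,\tfrac1\pi\int_{\R}\big\|\mu\,(\al\cdot\bp+i\mu)^{-1}(\al\cdot\V{A})(\al\cdot\pa+i\mu)^{-1}\vp\big\|^{2}\,d\mu .
\end{equation*}
The remaining integral is the crux. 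Using $\|\mu(\al\cdot\bp+i\mu)^{-1}\|\le1$ and the field bound $\|(\al\cdot\V{A})\psi\|\le\sqrt2\,d_{-1}\|\Hf^{1/2}\psi\|+d_{0}\|\psi\|$ coming from \eqref{eq:1.1}, together with the decay of $(\al\cdot\pa+i\mu)^{-1}$ as $|\mu|\to\infty$ (which makes the tail $|\mu|>\Lambda$ as small as one wishes) and a routine treatment near $\mu=0$ (using that $\bp^{2}$ has no zero modes, regularising $\bp^{2}$ by $\bp^{2}+\kappa^{2}$ and sending $\kappa\downarrow0$ if needed), one estimates this integral by $\ve''\SPb{\vp}{\Hf\vp}+k_{1}\|\vp\|^{2}$, where $\ve''>0$ is arbitrary and $k_{1}$ depends on $\ve''$ and $\V{G}$ but not on $c$. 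Obtaining this with the coefficient $\ve''$ small — i.e. showing that the magnetic perturbation costs only a small, $c$‑independent amount at the level of $\Hf$, while carefully handling the non‑commutativity of $\Hf^{1/2}$ with $\al\cdot\pa$ inside the resolvents — is the step I expect to be the hardest.

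\emph{Step (C).} Combining (A) and (B): for all $\ve',\ve'',\delta>0$ there is $k'>0$, independent of $c\ge1$, with
\begin{equation*}
\SPb{\vp}{|\bp|\vp}\le(1+\ve')\big(\tfrac1c+\sqrt{2m}\,\delta\big)\SPb{\vp}{\da\vp}+\ve''\SPb{\vp}{\Hf\vp}+k'\|\vp\|^{2}.
\end{equation*}
As $\da\ge0$, both $\SPb{\vp}{\da\vp}$ and $\SPb{\vp}{\Hf\vp}$ are $\le\SPb{\vp}{(\da+\Hf)\vp}$, so it suffices to pick $\ve',\ve'',\delta$ (depending only on $\ve$) with $(1+\ve')(\tfrac1c+\sqrt{2m}\,\delta)\le\sqrt{(1+\ve)(\ve+1/c^{2})}$ and $\ve''\le\sqrt{(1+\ve)(\ve+1/c^{2})}$ for all $c\ge1$. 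For the second inequality take $\ve''\le\sqrt{\ve(1+\ve)}$. For the first, writing $u:=1/c\in(0,1]$ one has $\sqrt{(1+\ve)(\ve+u^{2})}-u=\dfrac{\ve(1+\ve+u^{2})}{\sqrt{(1+\ve)(\ve+u^{2})}+u}\ge\dfrac{\ve}{2+\ve}$, so it is enough that $\ve'+(1+\ve')\sqrt{2m}\,\delta\le\tfrac{\ve}{2+\ve}$, which holds once $\ve'$ and $\delta$ are small. Setting $k_{\ve}:=k'$ then gives \eqref{eq:46}.
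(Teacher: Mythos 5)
Your Steps (A) and (C) are fine, but Step (B) --- which you yourself identify as the crux --- contains a genuine gap, and the route you sketch fails exactly where you declare it ``routine''. After using $\|\mu(\al\cdot\bp+i\mu)^{-1}\|\klg 1$ and the field bound for $\al\cdot\V{A}$, your remainder integral produces, among other terms,
\begin{equation*}
\frac1\pi\int_{\R}\big\|(\al\cdot\pa+i\mu)^{-1}\vp\big\|^{2}\,d\mu \;=\; \SPb{\vp}{|\al\cdot\pa|^{-1}\vp}
\end{equation*}
(and the analogous expression with $\Hf^{1/2}$ wedged between non-commuting factors). Since $\al\cdot\pa$ is a massless-type operator whose spectrum extends down to $0$, the negative power $|\al\cdot\pa|^{-1}$ is \emph{not} form-bounded by $\Hf+\const$, nor by any small multiple of $\da+\Hf$ plus a constant with $\vp$-independent constants: already in the free situation ($\V{A}=0$, vacuum sector) the dilated states $\vp_R=R^{-3/2}\chi(\cdot/R)\otimes\Omega$ satisfy $\SPb{\vp_R}{|\bp|^{-1}\vp_R}\to\infty$ while $\|\vp_R\|$ is constant, $\SPb{\vp_R}{\Hf\vp_R}=0$ and $\SPb{\vp_R}{|\bp|\vp_R}\to0$. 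Hence the claimed bound of the error integral by $\ve''\SPb{\vp}{\Hf\vp}+k_1\|\vp\|^2$ cannot follow from the operator-norm estimates you list; the obstruction is an infrared one at small $\mu$, and neither the large-$\mu$ decay of the resolvent, nor ``no zero modes'', nor regularising $\bp^2$ by $\bp^2+\kappa^2$ (the constants blow up as $\kappa\downarrow0$) addresses it. Likewise, the non-commutativity of $\Hf^{1/2}$ with $\al\cdot\pa$ inside the resolvents is not a technicality to be deferred: it is the heart of the lemma.

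This is precisely where the paper inserts a nontrivial input, Lemma \ref{mattestock1}: the (approximate) positivity of the anticommutator $|\DAT|\HT+\HT|\DAT|$, proved via pull-through estimates in \cite{MatteStockmeyer2009a}. The paper stays at the level of squares, $\bp^{2}\klg(1+\ve)\big[(\al\cdot\pa)^{2}+\tfrac{k}{\ve}(\Hf+1)\big]$, then \emph{adds} the nonnegative anticommutator term $\delta\,(|\DAT|\HT+\HT|\DAT|)$ for free, completes the square to $(|\DAT|+\delta\HT)^{2}+\const$, and only then applies the operator monotonicity of the square root to get $|\bp|\klg\sqrt{1+\ve}\,(|\DAT|+\delta\HT)+\const$; the subsequent passage from $|\al\cdot\pa|$ to $\da$ is essentially your Step (A). Without a substitute for that anticommutator positivity (or an equivalent commutator control between the field operators and the covariant Dirac-type operator), your Step (B) remains an unproven claim, so the proposal as it stands does not prove \eqref{eq:46}.
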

\begin{proof}
We start by observing that, as quadratic form on $\core$,
\begin{equation}
  \label{eq:10}
\begin{split}
 \bp^2&=\pa^2
-\pa\cdot\V{A}-\V{A}\cdot\pa+\V{A}^2\,.
\end{split}
\end{equation}
Estimating the cross terms and using \eqref{eq:1.1} we get, for $\ve\in(0,1]$,
\begin{equation}
  \label{eq:34}
\begin{split}
  \bp^2&\le (1+\ve) \,\pa^2+\frac{2}{\ve}
\V{A}^2\le (1+\ve) \,\pa^2+\frac{4 d_{-1}^2}{\ve}\Hf
+\frac{2d_0^2}{\ve}\,.
\end{split}  
\end{equation}
We  use \eqref{eq:1} to include the spin-term. For 
some $\delta>0$ and $\vp\in\core$, we have
\begin{equation}
  \label{eq:35}
\begin{split}
  (1+\ve)\big|\SPb{\vp}{{\bf \Sigma}\cdot\V{B}\vp}\big|&\le 4d_1 \|\Hf^{1/2}\vp\|\,\|\vp\|\\ &\le \frac{2d_1}{\delta}\SPb{\vp}{\Hf\vp}+2d_1\delta\|\vp\|^2\,\\
&\le \SPb{\vp}{\Hf\vp}+4d_1^2\|\vp\|^2\,,
\end{split}
\end{equation}
where we choose $\delta=2d_1$. Inserting \eqref{eq:35} in \eqref{eq:34} and using the identity
$\displaystyle (\al\cdot\pa)^2=\pa^2+ {\bf \Sigma}\cdot\V{B}$ we get that
there exists a constant $k>0$ such that, for all $\ve\in(0,1]$,
\begin{equation}
  \label{eq:47}
  \frac{\bp^2}{1+\ve}\le (\al\cdot\pa)^2+1+\frac{k}{\ve}(\Hf+1)\equiv \DAT^2+\frac{k}{\ve}(\Hf+1)\,,
\end{equation}
where we define $\DAT:=\al\cdot\pa+\beta=\DA|_{m=c=1}$. In view of
Lemma \ref{mattestock1} (with $m=c=1$) we have, for $\HT=\Hf+E$,
$E>\max\{1,(10\pi d_1)^2\}$, and some $\delta>0$,
\begin{equation}
  \label{eq:48}
\begin{split}
   \frac{\bp^2}{1+\ve}&\le \DAT^2+\delta(|\DAT|\HT+\HT|\DAT|)+\frac{k}{\ve}\HT\\
&\le (|\DAT|+\delta\HT)^2+\big(\frac{k}{2\ve\delta}\big)^2\,.
\end{split}
\end{equation}
Observing that from $(\al\cdot\pa)^2=2m\da+\da^2/c^2$ follows $|\al\cdot\pa|\le\sqrt{1/c^2+\ve}\,\da+m/\ve^{1/2}$ and that the square root is operator monotone, we get that
\begin{align}\nonumber
  \frac{|\bp|}{\sqrt{1+\ve}}&\le |\DAT|+\delta\HT+\frac{k}{2\ve\delta}\le
|\al\cdot\pa|+\delta\HT+\frac{k}{2\ve\delta}+1\\
&\nonumber\le \sqrt{1/c^2+\ve}\,\da+\delta\HT+\frac{k}{2\ve\delta}+1+\frac{m}{\ve^{1/2}}\\
&\label{e}\le \sqrt{1/c^2+\ve}\,(\da+\Hf)+\sqrt{2}E+\frac{k}{2\ve^{3/2}}+1+\frac{m}{\ve^{1/2}}\,,
\end{align}
where in the last inequality we choose $\delta=\sqrt{1/c^2+\ve}> \ve^{1/2}$
and we used that $\ve\le 1\le c$. Inequality \eqref{e} proves the lemma. 
\end{proof}
\begin{corollary}\label{boundV}
For any $\lambda\in(0,1]$ there exists $R_\lambda>0$, such that
for all $\rho>R_\lambda$ and  $c> \max\{2/\lambda,2a/\lambda\}$,  
\begin{equation}
  \label{eq:50}
  \||V|^{1/2}(\Hr{}+\rho)^{-1/2}\|^2\le \lambda\,,
\end{equation}
where $a$ is the constant appearing in Hypothesis \ref{hyp-V}.
\end{corollary}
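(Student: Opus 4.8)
The plan is to bound the multiplication operator $|V|$ by $|\bp|$ via Hypothesis \ref{hyp-V}, then use Lemma \ref{formbound} to dominate $|\bp|$ by $\Hr{}$ with a coefficient that becomes small when $c$ is large, and finally convert this into the stated resolvent estimate by elementary manipulations. For the first step, one observes that \eqref{eq:17} reads $V^2\le a^2\bp^2+b^2$ as quadratic forms on $\core$, hence on $H^1(\R^3,\C^4)$. Operator monotonicity of $t\mapsto\sqrt t$ gives $|V|\le(a^2\bp^2+b^2)^{1/2}$, and since $(a|\bp|+b)^2= a^2\bp^2+2ab|\bp|+b^2\ge a^2\bp^2+b^2$ with both sides functions of the single operator $|\bp|$, also $(a^2\bp^2+b^2)^{1/2}\le a|\bp|+b$. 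Consequently $\form(|\bp|)=H^{1/2}(\R^3,\C^4)\subseteq\form(|V|)$ and
\[
\SPb{\vp}{|V|\vp}\le a\SPb{\vp}{|\bp|\vp}+b\|\vp\|^2,\qquad \vp\in\form(|\bp|).
\]

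Next, recall that $\Hr{}=\da+\Hf$ on $\core$ (see \eqref{eq:39}) and that $\core$ is a form core for $\Hr{}$. I would combine the above with Lemma \ref{formbound} — applied first to $\vp\in\core$ and then propagated to $\form(\Hr{})$ by a routine closure argument (all the forms $|\bp|$, $\da$, $\Hf$ being closable, so that the estimate, together with the inclusion $\form(\Hr{})\subseteq\form(|\bp|)$, carries over from the form core $\core$) — to obtain, for every $\ve\in(0,1]$, $c\ge 1$ and $\vp\in\form(\Hr{})$,
\[
\SPb{\vp}{|V|\vp}\le a\sqrt{(1+\ve)(\ve+1/c^2)}\;\SPb{\vp}{\Hr{}\vp}+(ak_\ve+b)\|\vp\|^2.
\]
Since $\da\ge 0$ and $\Hf\ge 0$ we have $\Hr{}\ge 0$, so for $\rho>0$ the operator $(\Hr{}+\rho)^{-1/2}$ is bounded, with $(\Hr{}+\rho)^{-1}\le\rho^{-1}$ and $\Hr{}(\Hr{}+\rho)^{-1}\le\id$. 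Substituting $\vp=(\Hr{}+\rho)^{-1/2}\psi\in\form(\Hr{})$ in the preceding inequality and taking the supremum over $\|\psi\|=1$ yields
\[
\big\||V|^{1/2}(\Hr{}+\rho)^{-1/2}\big\|^2\le a\sqrt{(1+\ve)(\ve+1/c^2)}+\frac{ak_\ve+b}{\rho}.
\]

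It remains to choose the parameters. Given $\lambda\in(0,1]$, fix $\ve_\lambda\in(0,1]$, depending only on $a$ and $\lambda$, so small that $a^2(1+\ve_\lambda)\ve_\lambda+(1+\ve_\lambda)\lambda^2/4\le 9\lambda^2/16$; this is possible because the left-hand side tends to $\lambda^2/4$ as $\ve_\lambda\to 0^+$. If $c>2a/\lambda$ then $1/c^2<\lambda^2/(4a^2)$, so that $a\sqrt{(1+\ve_\lambda)(\ve_\lambda+1/c^2)}<3\lambda/4$, while $c>2/\lambda$ guarantees $c>1$, so that Lemma \ref{formbound} applies with $\ve=\ve_\lambda$. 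Setting $R_\lambda:=4(ak_{\ve_\lambda}+b)/\lambda$ — a constant depending only on $a$, $b$, $\lambda$ and the field bounds — we get $(ak_{\ve_\lambda}+b)/\rho<\lambda/4$ for $\rho>R_\lambda$, and \eqref{eq:50} follows.

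I expect the only genuinely delicate point to be the form-domain bookkeeping in the first two steps: one must check that $\SPb{\vp}{|V|\vp}$ is finite for every $\vp$ in the form domain of $\Hr{}$ before the chain of estimates is meaningful, and that the inequality of Lemma \ref{formbound}, established on $\core$, carries over to that form domain. Everything else is a mechanical combination of Hypothesis \ref{hyp-V}, Lemma \ref{formbound} and the bounds $\Hr{}(\Hr{}+\rho)^{-1}\le\id$, $(\Hr{}+\rho)^{-1}\le\rho^{-1}$; in the final step the only thing to watch is that $\ve_\lambda$, and hence $R_\lambda$, must be fixed independently of $c$, which is exactly why both $c>2/\lambda$ and $c>2a/\lambda$ are imposed.
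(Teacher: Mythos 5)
Your argument is correct and follows essentially the same route as the paper's proof: Hypothesis \ref{hyp-V} gives $|V|\le a|\bp|+b$, Lemma \ref{formbound} then dominates $|\bp|$ by $\Hr{}$ with a coefficient that is small for large $c$, and the resolvent bound follows by sandwiching with $(\Hr{}+\rho)^{-1/2}$. The only differences are cosmetic bookkeeping (you split $\lambda$ as $3\lambda/4+\lambda/4$ and spell out the operator-monotonicity and form-core closure details, whereas the paper absorbs the constant into $\lambda(\Hr{}+R_\lambda)$ with an explicit choice of $\ve$).
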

\begin{proof}
We use \eqref{eq:46} and note that  \eqref{eq:17} implies  that
\begin{equation}
  \label{eq:51}
  |V|\le a|\bp|+b\le  \sqrt{2}a\sqrt{\ve+1/c^2} \Hr{} +b+ a k_\ve< \lambda
(\Hr{}+\frac{1}{\lambda}(b+ a k_\ve))\,,
\end{equation}
where we set $\ve=\min\{\lambda^2/4,\lambda^2/(4a^2)\}$ and
 we used that $1/c^2<\min\{\lambda^2/4,\lambda/(4a^2)\}$. These facts prove the claim with $R_\lambda:=(b+ a k_\ve)/\lambda$.
\end{proof}
\begin{proof}[Proof of Theorem \ref{sa}]
  For $c$ large enough we have, by \eqref{eq:51}, that $V$ is $\Hr{}$-form bounded with form bound less that one. Therefore, by the KLMN theorem (see \cite[Theorem X.17]{ReedSimonII}), the operator $\Hr{V}$ has a unique self-adjoint extension with form domain $\form(\Hr{})$ which is bounded below by $-R_{1}$, where $R_1$ is defined  after Equation \eqref{eq:51}.

In order to prove that $\form(\Hr{})=\form(|\bp|+\Hf)$, it suffices to show that
the quadratic form norms associated to $\Hr{}$ and $|\bp|+\Hf$, defined on $\core$, are equivalent. Using  the  inequality 
  \begin{equation*}
\frac{\da^2}{c^2}\le \pa^2+{\bf \Sigma}\cdot\V{B}\le 2\bp^2+2\V{A}^2+{\bf \Sigma}\cdot\V{B}\,,
  \end{equation*}
and the estimates \eqref{eq:1.1} and \eqref{eq:35}, we find some constant
$K'>0$, such that
\begin{equation*}
  \label{eq:37}
  \da^2\le (K')^2[(|\bp|+\Hf)^2+1]\,.
\end{equation*}
Therefore, by the (operator) monotonicity of the square root, we get
\begin{equation*}
  \label{eq:38}
  \da+\Hf\le K'(|\bp|+\Hf) + \Hf+K'\le (1+K')(|\bp|+\Hf+1)\,. 
\end{equation*}
To conclude the proof we note that, by Lemma \ref{formbound}, we know that there is a constant $K$, such that $|\bp|\le K(\da+\Hf+1)$. Thus, for $\delta\in(0,1)$, we have, as quadratic forms on $\core$, 
\begin{equation*}
  \label{eq:7}
  \delta(\Hf+|\bp|)\le K\da+(K+\delta)\Hf+K\le(K+\delta)(\da+\Hf+1)\,.
\end{equation*}
This finishes the proof of the assertion. 
\end{proof}
\section{The non-relativistic limit}\label{nrl}
\begin{lemma}\label{factorHf}
Assume that $c\ge1$ is so large that $\Hr{V}$ is self-adjoint. Then,  for any $z\in \C$ with $-z\in\varrho(\Hr{V})\cap \varrho(\Hn^V)$, 
 the identity  
\begin{equation}
  \label{eq:11}
\ra{z}=\big[1+\ra{z}\frac{\da^2}{2mc^2}\big]\ran{z}\,
\end{equation}
holds on $\hilbert$.
\end{lemma}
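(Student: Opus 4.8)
The plan is to establish \eqref{eq:11} by a straightforward resolvent manipulation, starting from the factorization of $\TA$ given in Lemma \ref{lemma1}. Recall from \eqref{eq:4} that $\TA = \da + \da^2/(2mc^2)$ on $\dom(\TA) = \dom(\bp^2 + \Hf)$. Since $\Hn^V = \TA + V + \Hf$ and $\Hr{V} = \da + V + \Hf$ (as self-adjoint operators; the form identities of Theorem \ref{sa} together with Lemma \ref{lemma1} let us compare domains), we have on $\dom(\Hn^V)$ the operator identity
\begin{equation*}
  \Hn^V + z = \Hr{V} + z + \frac{\da^2}{2mc^2}\,.
\end{equation*}
Here I must first check that $\dom(\Hn^V) = \dom(\bp^2+\Hf) \subseteq \dom(\Hr{V})$ so that the right-hand side makes sense: this follows because $\da^2 \le (K')^2[(|\bp|+\Hf)^2+1]$ (shown in the proof of Theorem \ref{sa}), hence $\da$ is $(\bp^2+\Hf)$-bounded, and $V$ is $\bp^2$-bounded by Hypothesis \ref{hyp-V}, so $\da + V + \Hf$ is a well-defined operator on $\dom(\bp^2+\Hf)$ agreeing with $\Hr{V}$ there.

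Next I would apply $\ra{z} = (\Hr{V}+z)^{-1}$ on the left and $\ran{z} = (\Hn^V+z)^{-1}$ on the right of the displayed identity. Since $\Ran(\ran{z}) = \dom(\Hn^V)$, composing is legitimate and yields, on all of $\hilbert$,
\begin{equation*}
  \ra{z}(\Hn^V+z)\ran{z} = \ra{z}(\Hr{V}+z)\ran{z} + \ra{z}\frac{\da^2}{2mc^2}\ran{z}\,,
\end{equation*}
that is, $\ra{z} = \ran{z} + \ra{z}\,\frac{\da^2}{2mc^2}\,\ran{z} = \big[1 + \ra{z}\,\frac{\da^2}{2mc^2}\big]\ran{z}$, which is \eqref{eq:11}. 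The operator $\da^2 \ran{z}$ appearing here is bounded by Lemma \ref{lemma1} (the case $\nu = 2$ of \eqref{eq:13}), so every term is a genuine bounded operator on $\hilbert$ and no domain subtleties remain in the final formula.

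The only real obstacle is the bookkeeping at the level of unbounded operators: making sure that $\Hr{V}$, defined a priori via a quadratic form in \eqref{eq:23}, actually acts as $\da + V + \Hf$ on the dense domain $\core$ (or on $\dom(\bp^2+\Hf)$), and that the manipulation $\ra{z}(\Hr{V}+z)\psi = \psi$ is valid for $\psi = \ran{z}\vp \in \dom(\Hn^V)$. I expect this to be handled by noting that $\core$ is a form core for both operators, that all three summands $\da$, $V$, $\Hf$ are relatively $(\bp^2+\Hf)$-bounded, and that $\dom(\Hn^V) = \dom(\bp^2 + \Hf)$ by the discussion following \eqref{eq:25}; then $\Hr{V}$ restricted to this domain is exactly $\da + V + \Hf$, and the identity propagates from $\core$ by density and closedness. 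Everything else is a one-line resolvent computation.
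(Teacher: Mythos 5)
Your argument is correct and rests on the same two ingredients as the paper's proof, namely the identity $\TA=\da+\da^2/(2mc^2)$ of Lemma \ref{lemma1} and the bound \eqref{eq:13} guaranteeing that $\da^2\ran{z}$ is bounded; the difference lies only in how the unbounded-operator bookkeeping is organized. The paper never proves (or needs) the inclusion $\dom(\Hn^V)\subset\dom(\Hr{V})$: it computes $\ra{z}(\Hn^V+z)\vp=\big(1+\ra{z}\tfrac{\da^2}{2mc^2}\big)\ran{z}(\Hn^V+z)\vp$ only for $\vp\in\core$, which merely requires that $\core\subset\dom(\Hr{V})$ with $\Hr{V}\vp=(\da+V+\Hf)\vp$ there (a consequence of the representation theorem for the KLMN form, left implicit), and then extends \eqref{eq:11} from the dense set $(\Hn^V+z)\core$ to all of $\hilbert$ using essential self-adjointness of $\Hn^V$ on $\core$ together with the boundedness of $\ra{z}\da^2\ran{z}$. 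Your route instead upgrades the operator identity $\Hn^V+z=\Hr{V}+z+\da^2/(2mc^2)$ to all of $\dom(\Hn^V)=\dom(\bp^2+\Hf)$ and then sandwiches with $\ra{z}$ and $\ran{z}$, which eliminates the limiting step but requires the stronger statement that $\Hr{V}$ coincides with the operator sum $\da+V+\Hf$ on the whole of $\dom(\bp^2+\Hf)$. That statement is true, and your closing paragraph identifies the correct way to get it (the action on the form core $\core$, relative $(\bp^2+\Hf)$-boundedness of $\da$, $V$, $\Hf$ -- for $\da$ one can simply quote $\dom(\bp^2+\Hf)\subset\dom(\da^\nu)$ from Lemma \ref{lemma1} rather than re-deriving it -- and closedness of $\Hr{V}$), so your proof is complete once that sketch is written out; the trade-off is that you prove a slightly stronger domain fact than the lemma requires, where the paper's core-plus-density argument deliberately avoids it.
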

\begin{proof}
We pick some $\vp\in\core$ and 
use \eqref{eq:4} to compute
\begin{equation*}
  \label{eq:15}
  \begin{split}
\ra{z}(\Hnr{V}+z)\vp&=\ra{z}(\TA+\hf+V+z)\vp\\
&=\ra{z}\big(\da+\frac{\da^2}{2\mc}+V+\hf+z\big)\vp=\big(1+\ra{z}\frac{\da^2}{2\mc}\big)\vp\\
&=\big\{\big(1+\ra{z}\frac{\da^2}{2\mc}\big)\ran{z}\big\}(\Hnr{V}+z)\vp\,.
  \end{split}
\end{equation*}
Due to the fact that
$\Hn^V$ is essentially self-adjoint on $\core$ we find \eqref{eq:11} on some dense subset of $\hilbert$. Moreover, thanks to \eqref{eq:13},  the term in $\{\cdots\}$ above is bounded. Therefore, by a simple limit argument, we obtain \eqref{eq:11}.
 \end{proof}
\begin{proof}[Proof of Theorem \ref{main}]
We first note that both operators, $\Hn^V$ and $\Hr{V}$, are bounded below by a constant independent of the speed of light $c$. In fact, by \eqref{eq:4}, we have that  $\Hn^V\ge\Hr{V}$ the last operator being bound\-ed below (independent of $c$) due to Theorem \ref{sa}.
We define $e_V:=\inf(\sigma(\Hr{V}))$ and choose $\rho>|e_V|$.

Next, we observe that it is enough to prove Theorem \ref{main} for $\rho$ instead of $z$. Indeed,  define $\mathcal{S}_\varkappa:=(H_\varkappa^V+\rho)/(H_\varkappa^V+z)$, for $z\in \C\setminus \R$ and $\varkappa=c$ or $\infty$. Clearly,
  $\|\mathcal{S}_\varkappa\|\le 1+(\rho+|z|)/|\Im{z}|$.
Further, for $z\in \C\setminus \R$, we have the following  well-known operator identity
\begin{equation}\label{**}
  \ra{z}-\ran{z}=\mathcal{S}_c \,(\ra{\rho}-\ran{\rho})\, \mathcal{S}_\infty\,.
\end{equation}
Therefore, on account of Lemma \ref{factorHf} and \eqref{**}, it suffices to show that
\begin{equation}\label{lim}
\|\ra{\rho}\frac{\da^2}{2mc^2}\ran{\rho}\|\to 0\qquad\mbox{ as}\qquad c\to\infty\,.
\end{equation}
We observe 
that by Tiktopolus' Formula (see \cite{Simon1971Book}) and \eqref{eq:50}, we have, for  $c\ge \max\{4,4a\}$, that
\begin{equation}
  \label{eq:12}
  \ra{\rho}=\raf^{1/2}\big(1+\raf^{1/2}V\raf^{1/2}\big)^{-1}\raf^{1/2}\,.
\end{equation}
Therefore, using the obvious fact that $\|\da^{1/2}\raf^{1/2}\|\le 1$, we find
\begin{equation}
  \label{eq:18}
\begin{split}
  \|\ra{\rho}\frac{\da^2}{2mc^2} \ran{\rho} \|&\le \frac{\|\raf^{1/2}\|\,\|\da^{1/2}\raf^{1/2}\|\,\|\da^{3/2}\ran{\rho}\|}
{2\mc (1-\||V|^{1/2}\raf^{1/2}\|^2)}\\
&\le \frac{\|\da^{3/2}\ran{\rho}\|}{\mc\sqrt{\rho-|e_V|}}=\bigO(c^{-1/2})\,,
\end{split}
\end{equation}
where in the last estimate we use \eqref{eq:13}. This proves \eqref{lim}
and concludes the proof of the theorem. 
\end{proof}

\bigskip

\noindent
{\bf Acknowledgement:} I am grateful to Oliver Matte and Thomas
 \O stergaard S\o rensen for  proofreading the manuscript and helpful discussions.
This work has been partially supported
by the DFG (SFB/TR12).
\appendix
\section{}\label{ap1}
\begin{proof}[Proof of Lemma \ref{field-bounds}]
Let $\vp\in\sC_0$ and $f, f\omega^{-1/2}\in \mathfrak{h}$. Using \eqref{*} and the Cauchy-Schwarz inequality, we get that
\begin{equation}
  \label{eq:3}
\begin{split}
  \|a(f)\vp\|&=\big\|\int \overline{f}(k) a(k) \vp \,dk\big\|\le \int
\frac{|f(k)|}{\omega(k)^{1/2}}\,\|\omega(k)^{1/2}a(k)\vp\|\,dk\\
&\le \|f\omega^{-1/2}\|_{\HP}\,\Big(\int \|\omega(k)^{1/2}a(k)\vp\|^2\,dk\Big)^{1/2}\,,
\end{split}
\end{equation}
from which \eqref{eq:24} follows,  after using that
\begin{equation}\label{Hf=dGamma}
  \SPb{H_f^{1/2} \phi}{H_f^{1/2} \psi} = \int\omega(k) 
  \SPn{a(k) \phi}{a(k) \psi} dk\,,
  \qquad \phi,\psi\in\dom(\Hf^{1/2})\,,
\end{equation}
which is a consequence of Fubini's theorem. Equation 
\eqref{eq:24b} follows from the fact that
\begin{equation}
  \label{eq:9}
\|\ad(f)\vp\|^2=\SPb{\vp}{a(f)\ad(f)\vp}= \|a(f)\vp\|^2+\|f\|^2_{\mathfrak{h}}    \|\vp\|^2\,, 
\end{equation}
where we used the commutation relations \eqref{ee}.

Finally equations \eqref{eq:1} and \eqref{eq:1.1} follow from \eqref{eq:24} and \eqref{eq:24b}, the definitions 
\eqref{dd} and \eqref{eq:22}, and the Hypothesis \ref{hyp-G}.
\end{proof}
We shortly present the proof Lemma  \ref{mattestock1}
which can be found in \cite[Lemma 4.1]{MatteStockmeyer2009a}. We will keep the notation of \cite{MatteStockmeyer2009a}.
\begin{proof}[Proof of Lemma \ref{mattestock1}]
To begin with we observe that by means of the pull-through formula we have
(see \cite[Lemma 3.1]{MatteStockmeyer2009a}), 
for $\HT=\Hf+E$ and $E>0$,
\begin{equation}
\|T_{1/2}\|:=\big\| [\valpha\cdot\V{A}, \HT^{-1/2}] \HT^{1/2} \big\|
 \klg 
2 d_1/E^{1/2}
\,.\label{combound2}
\end{equation}
We note that a simple computation shows (see \cite[Corollary 3.2]{MatteStockmeyer2009a}), for $R_\eta:=1/(\DA-i\eta)$, $\eta\in\R$, and 
$\psi\in\hilbert$, that
\begin{equation}
  \label{eq:27}
  R_\eta\HT^{-1/2}\psi=(1+cR_\eta T_{1/2})\HT^{-1/2}R_\eta\psi\,,
\end{equation}
which implies, using that $\|R_\eta\|=1/\sqrt{((\mc)^2+\eta^2)}$, that
\begin{equation}
  \label{eq:28}
  \|\HT^{-1/2}R_\eta \HT^{1/2}\|\le\frac{1}{\sqrt{(\mc)^2+\eta^2}}\frac{1}{1-2d_1/(mcE^{1/2})}\,,
\end{equation}
provided $E>(2d_1/(mc))^2$.

Now, we use \eqref{eq:28} to obtain a bound for $
  S_{1/2}:=[\sgn(\DA), \HT^{-1/2}] \HT^{1/2}\,$.
Applying the formula (see \cite[Lemma~VI.5.6]{Kato})
$
\sgn(\DA)\varphi= \lim_{\tau\to\infty} 
\int_{-\tau}^\tau R_\eta \varphi\, \frac{d\eta}{\pi}$, for $
\psi\in\HR
$, we find, for $\vp\in\core$,
\begin{equation}
  \label{eq:30}
\begin{split}
  S_{1/2}\vp&=c\int_{-\infty}^\infty [R_\eta, \HT^{-1/2}] \HT^{1/2}\vp \,\frac{d\eta}{\pi}=c\int_{-\infty}^\infty R_\eta T_{1/2} \HT^{-1/2}R_\eta \HT^{1/2}
\vp \,\frac{d\eta}{\pi}\,.
\end{split}
\end{equation}
From this follows, using \eqref{eq:28}, \eqref{combound2},
and $\int \|R_\eta\|^2 d\eta/\pi=1/\mc$, that
\begin{equation}
  \label{eq:31}
  \|S_{1/2}\|\le \frac{2d_1}{mcE^{1/2}}\frac{1}{1-2d_1/(mcE^{1/2})}\,,
\quad\mbox{for}\quad E>(2d_1/(mc))^2\,.
\end{equation}
 An analogous calculation and the fact that$
  \int  \| |\DA|^{1/2}R_\eta \|\, \| R_\eta\|    d\eta/\pi\le 2\pi/\sqrt{\mc},$
show that
\begin{equation}
  \label{eq:33}
   \||\DA|^{1/2} S_{1/2}\|\le \frac{4\pi d_1}{\sqrt{m}E^{1/2}}\frac{1}{1-2d_1/(mcE^{1/2})}\,,
\quad\mbox{for}\quad E>(2d_1/(mc))^2\,.
\end{equation}
Finally, we argue as in \cite[Lemma 4.1]{MatteStockmeyer2009a}: 
Let $\phi\in\core$ and set $\psi:=\HT^{1/2}\,\phi$.
We have
\begin{align*}
\lefteqn{
\Re \SPb{\DA \HT^{-1/2} \psi}{\sgn(\DA) \HT^{1/2} \psi}
}
\\
&=
\Re \SPb{(\DA-c\,T^*_{1/2}) \psi}{\HT^{-1/2} \sgn(\DA) \HT^{1/2} \psi}
\\
&=
\Re \SPb{(\DA-c\,T^*_{1/2}) \psi}{\big(\sgn(\DA)-S_{1/2}\big) \psi}
\\
&\ge
\SPn{|\DA| \psi}{\psi}- 
\big\| |\DA|^{1/2} S_{1/2}|\DA|^{-1/2} \big\|\,\| |\DA|^{1/2} \psi \|^2 
\\
& \qquad\qquad\qquad
- c\|T_{1/2}\|\,\||\DA|^{-1/2}\|^2 (1+\|S_{1/2}\|) \||\DA|^{1/2}\psi\|^2.
\end{align*}
A rough estimate shows, choosing $E>4(2d_1)^2/(mc)^2$ and using \eqref{combound2}, \eqref{eq:31}, and \eqref{eq:33}, that
\begin{equation}
  \label{eq:8}
  \Re\SPb{|\DA|\phi}{\HT\,\phi}\,\grg\,
(1-\frac{10\pi d_1}{mc E^{1/2}})\,\big\|\,|\DA|^{1/2}\,\wt{H}^{1/2}_f\,\phi\,\big\|^2
\,.
\end{equation}
This finishes the proof of Lemma \ref{mattestock1}.
\end{proof}

%
\end{document}